\newtheorem{definition}{Definition}
\newtheorem{lemma}{Lemma}
\newtheorem{theorem}{Theorem}
\newtheorem{corollary}{Corollary}
\newlength{\Mdent}
\newcommand*{\Mdentspace}{\hspace{\Mdent}}
\newcommand*{\noMdent}{\setlength{\mathindent}{0pt}}
\newcommand*{\reMdent}{\setlength{\mathindent}{\Mdent}}
\newcommand*{\IMP}{\makebox[\Mdent][l]{$\Longrightarrow$}}
\newcommand*{\IFF}{\makebox[\Mdent][l]{$\Longleftrightarrow$}}
\newenvironment*{listeqn}{
\newcommand*{\COM}{\,,\\ & &&}
\newcommand*{\AND}{\\ &\text{and}&&}
\noMdent \alignat{2} &\Mdentspace && }{\endalignat}
\newcommand*{\BY}[1]{\qquad \text{by \eqref{#1}}}
\newcommand*{\BYthm}[1]{\qquad \text{by Theorem \ref{#1}}}
\newcommand*{\BYlem}[1]{\qquad \text{by Lemma \ref{#1}}}
\newcommand*{\st}{\text{ s.t. }}
\newcommand*{\Z}[1]{\mathbb{Z}_{#1}}
\newcommand*{\setdef}[3][]{\mathopen #1 \lbrace #2  \mathrel #1 \vert  #3 \mathclose #1 \rbrace}
\newcommand*{\set}[2][]{\mathopen #1 \lbrace #2 \mathclose #1 \rbrace}
\newcommand*{\MinSupp}[2][]{\mathrm{MinSupp}\mathopen #1 ( #2 \mathclose #1 )}
\newcommand*{\supp}[2][]{\mathrm{supp}\mathopen #1 ( #2 \mathclose #1 )}
\newcommand*{\dif}[2]{ \frac{\partial #1}{\partial #2} }
\DeclareRobustCommand{\hopto}{ \mathrel{\tikz[baseline=-0.75ex]{\draw[line width=0.1ex] (0ex,0ex) -- (1ex,0ex) -- ++(-0.5ex,0.5ex) ++(0ex,-1ex) -- ++(0.5ex,0.5ex);}}}
\newcommand*{\HS}[1]{\Omega_{#1}}
\newcommand*{\EA}[1]{\mathfrak{A}_{#1}}
\newcommand*{\CS}[1]{\Phi_{#1}}
\newcommand*{\FA}[1]{\mathcal{A}_{#1}}
\newcommand*{\FD}[1]{\mathcal{D}_{#1}}
\newcommand*{\FDprec}[1]{\FD{#1}^{\textup{prec}}}
\newcommand*{\FDprol}[1]{\FD{#1}^{\textup{prol}}}
\newcommand*{\Pow}[2][]{\mathcal{P}\mathopen #1 ( #2 \mathclose #1 )}
\newcommand*{\CovS}[1]{\phi^{(#1)}}
\newcommand*{\HistS}[1]{\gamma^{(#1)}}
\newcommand*{\Prev}{\vert_{-}}
\newcommand*{\PrecProl}[1]{\CS{#1}^{\textup{prec-prol}}}
\newcommand*{\Prec}[1]{\CS{#1}^{\textup{prec}}}
\newcommand*{\Prol}[1]{\CS{#1}^{\textup{prol}}}
\newcommand*{\Classical}[1]{\CS{#1}^{\textup{next Cl}}}
\newcommand*{\ClassicalS}[1]{\CS{#1}^{\textup{Cl}}}
\newcommand*{\Basic}[1]{\CS{#1}^{\textup{next Basic}}}
\newcommand*{\BasicS}[1]{\CS{#1}^{\textup{Basic}}}
\newcommand*{\GlobalS}[1]{\CS{#1}^{\textup{Global}}}
\newcommand*{\Extdef}{T^+}
\newcommand*{\Ext}[2][]{\Extdef #1 ( #2 \mathclose #1 )}
\begin{document}
\title{Evolving Realities for Quantum Measure Theory}
\author{Henry Wilkes\thanks{Theoretical Physics Group, Blackett Laboratory, Imperial College, London. SW7 2AZ. UK. E-mail: \nolinkurl{hw2011@ic.ac.uk}}}
\maketitle

\begin{abstract}
We introduce and explore Rafael Sorkin's \textit{evolving co-event scheme}: a theoretical framework for determining completely which events do and do not happen in evolving quantum, or indeed classical, systems. The theory is observer-independent and constructed from discrete histories, making the framework a potential setting for discrete quantum cosmology and quantum gravity, as well as ordinary discrete quantum systems. The foundation of this theory is Quantum Measure Theory, which generalises (classical) measure theory to allow for quantum interference between alternative histories; and its co-event interpretation, which describes whether events can or can not occur, and in what combination, given a system and a quantum measure. In contrast to previous co-event schemes, the evolving co-event scheme is applied in stages, in the stochastic sense, without any dependence on later stages, making it manifestly compatible with an evolving block view. It is shown that the co-event realities produced by the basic evolving scheme do not depend on the inclusion or exclusion of zero measure histories in the history space, which follows non-trivially from the basic rules of the scheme. It is also shown that this evolving co-event scheme will reduce to producing classical realities when it is applied to classical systems.
\end{abstract}

\section{Introduction}

Quantum Measure Theory (QMT) \cite{1994_sum_rules_Rafael_Sorkin,1994_restricted_preclusion_Rafael_Sorkin,2007_co-events_Rafael_Sorkin}, at its basis, takes probability measure theory and weakly extends it to accommodate quantum interference. Whilst the usual ``Hilbert space, operators and wavefunctions'' formulation of quantum mechanics will predict probabilities, they are restricted to ``operator at some time''-based events, and the theory is thus unable to answer inherently spacetime questions and lacks a description without observers. Such restrictions may not hinder quantum mechanic's application in non-relativistic and non-gravitational scenarios, but even here we still arguably lack a clear understanding of what our quantum systems are actually doing. Moreover, the measurement/collapse mechanic's classical/quantum split, and the special parameter status of time in this usual formulation are both obstacles to formulating a theory of quantum gravity or cosmology.

In contrast, QMT, which was constructed with the spacetime model of causal sets in mind, uses spacetime objects -- \textit{histories} -- as the basis of its theory, and does not feature any observer dependence or any collapse mechanic. The use of histories also allows us to treat quantum and classical objects similarly, keeping the theory general and applicable to many systems. What a history exactly is depends on the system being studied, but in general it will be a full (spacetime) description of a system's evolution. Sections II C-D in reference \cite{1993_spacetime_QM_lectures_Jame_Hartle}, by James Hartle, give a good demonstration of how an ordinary quantum system with a Hilbert space can give rise to histories using projection operators, and Section IV shows how this can be generalised to extend quantum mechanics to any system, including a closed universe, using histories and without constructing a Hilbert space. For example, if we wanted to describe two particles travelling on a fixed background spacetime, each history would be a pair of trajectories through the given spacetime. Moreover, we are particularly interested in making statements about \textit{events}, which are simply sets of histories. For the two particle system, one possible event would be the collection of all histories that satisfy ``at least one particle passes through the spacetime region $R$''.

In addition, whilst Hilbert space quantum mechanics uses the Hamiltonian and collapse for its dynamics, in QMT we use the \textit{quantum measure}, which measures the sum of quantum interferences between pairs of histories in an event. Again, \cite{1993_spacetime_QM_lectures_Jame_Hartle} shows how the evolution probabilities of a Hilbert space quantum system can be generalised to define a quantum measure\footnote{In fact, the reference derives a decoherence function, but the quantum measure is the diagonal of this binary function.}. Therefore, for ordinary quantum systems, when we restrict ourselves to events that correspond to sequenced measurement outcomes, the quantum measure will return the usual probabilities of quantum mechanics. However, if we then consider all possible events, including those that ordinary quantum mechanics is silent about, the quantum measure no longer obeys the necessary sum rules for (classical) measures because there now exists quantum interference between events. As such, the quantum measure can not, in general, be interpreted as a likelihood. Note that the Decoherent Histories approach to quantum mechanics uses these same basic objects and observations, but from here on QMT differs from this other approach.

Traditionally, QMT is formally introduced as a triple $(\Omega,\mathfrak{A},\mu)$ in analogue to (classical) measure theory, where $\Omega$ is the \textit{history space}; $\mathfrak{A}$ is the \textit{event algebra}: a Boolean subset of $\Pow{\Omega}$ (power set of $\Omega$); and $\mu:\mathfrak{A} \to \mathbb{R}$ is the quantum measure. This is the same triple that would be used in (classical) measure theory, except the event algebra is not necessarily a full $\sigma$-algebra, and the quantum measure does not obey the necessary sum rule for a (classical) measure. Since the quantum measure is not, in general, a probability measure, in QMT we seek a different way to interpret what the quantum measure tells us about reality. In particular, we only assume the statement ``events with zero quantum measure do not happen'', as we would in interpreting a probability measure, and from here we would like to know what versions of reality are possible or not.

To do so, we represent a possible reality by a \textit{co-event}, which is a map from the event algebra $\mathfrak{A}$ to $\Z{2}=\set{ 0,1 }$. An event occurs in the reality given by a co-event if the co-event maps it to $1$, and an event does not occur if it is mapped to $0$. A small set of these maps correspond to classical versions of reality, where classical logic is obeyed. Equivalently, these correspond to a reality where one history is \textit{the} history that happens, and whether some event happens or not depends only on whether it contains this history. For example, for a ball choosing to pass through one of three slits, one classical reality would be given by ``the ball goes through the middle slit'', and from here we could determine that the event ``the ball goes through the left or middle slit'' also happens, whilst the event ``the ball goes through the left or right slit'' does not happen. Whilst these classical versions of reality are certainly desirable, we find in QMT numerous examples where such classical realities are incompatible with quantum interference, such as the three-slit experiment in \cite{1994_restricted_preclusion_Rafael_Sorkin}, a Peres-Kochen-Specker set up in \cite{2008_kochen_specker_Fay_Dowker_and_Yousef_Ghazi-Tabatabai}, or the GHZ experiment in \cite{unpublished_ghz_experiment_Fay_Dowker}. Thus, in QMT we expect the co-events to have a more general structure that, precisely, expresses the counter-intuitive reality of quantum mechanics. The challenge of QMT has been deciding on a set of rules that use the quantum measure to derive some set of co-events that correspond to the realities that are allowed. To emphasise, the co-events are not intended to represent the limits of our `knowledge' of quantum systems, but are meant to represent the actual realities that are expressed. There have been a number of different such \textit{co-event schemes} \cite{2009_quantum_measure_schemes_Yousef_Ghazi-Tabatabai} developed under the general constraint that they must return to classical realities in classical scenarios (either by using a classical measure or through coarse graining) and that there must always be at least one allowed co-event, otherwise no reality could exist (which is far worse than a reality where nothing exists). One of the schemes that was shown to pass both criteria for finite $\Omega$ was the multiplicative scheme \cite{2007_anhomomorphic_logic_Rafael_Sorkin}, which also had a number of other appealing properties and uses \cite{2008_kochen_specker_Fay_Dowker_and_Yousef_Ghazi-Tabatabai,2012_logic_to_quantum_Rafael_Sorkin,2017_modus_ponens_Fay_Dowker_and_Kate_Clements_and_Petros_Wallden,2013_3_site_energy_Rafael_Sorkin}.

However, for systems that keep evolving, the above history space $\Omega$ would consist of histories that describe the system across an eternity, or some final time of the universe. Whilst this is not necessarily a mathematical issue, it gives us an inherently global perspective. Whilst we can classically go from a global description of reality (what happened everywhere from zero to infinity) to a time-finite one with the same classical properties (what happened up until this stage) this is not necessarily the case for a general co-event. Moreover, an investigation of the $n$-site hopper -- a finite quantum system that evolves in discrete time steps indefinitely -- led to the conclusion that any multiplicative co-event (one of the criteria for the multiplicative scheme) would deny every time-finite event \cite{2010_extending_quantum_measure_Fay_Dowker_and_Steven_Johnston_and_Sumati_Surya,2017_n-site_nirvana_Fay_Dowker_and_Vojtech_Havlicek_and_Cyprian_Lewandowski_and_Henry_Wilkes}. And so if we were to ask such a co-event what happened at any finite time we would leave empty-handed. Moreover, initial attempts to modify the multiplicative scheme to remove these global properties were prone to failure or trivialities, although there have been some other suggested modifications since \cite{2018_evolving_multiplicative_scheme_Rutvij_Bhavsar}.

This prompted the idea of an \textit{evolving} co-event scheme, which was suggested by Rafael Sorkin\footnote{Presented to the author during a discussion at Raman Research Institute in January 2016.}, which would describe reality up until some finite stage. To accommodate this we take a more stochastic approach: the system under consideration evolves in discrete \textit{stages} $t=0,1,\dotsc$ and for each stage we can define the quad $(\HS{t},\Prev,\EA{t},\mu_t)$, where the new object $\Prev$ is a function that relates histories in $\HS{t}$ to their past. This shifts the focus from global histories to histories that describe the system at some stage $t$, which also has the aesthetic benefit of making the term `history' more appropriate. Therefore, for each stage we can define a co-event from $\EA{t}$ to $\Z{2}$, which would fully describe what has happened up until that stage, but would not say anything about what happens at later stages. Now, an evolving scheme (with specifics given later) would construct the set of allowed co-events at some given stage $t$ using the allowed co-events from stage $t-1$ and the quantum measure $\mu_t$. Moreover, the evolving schemes in this paper will require that the new co-events do not dispute the past. This brings the evolving co-event scheme in line with the philosophical notion that the past remains fixed and that reality is determined in the present without reference to the future, which was noticeably absent in previous co-event schemes. This makes the scheme manifestly compatible with the growing block view, rather than being necessarily eternalist or presentist. Note that the growing block view is often adopted within causal set theory, without conflicting with the notion of general covariance \cite{2014_causal_sets_birth_Fay_Dowker,2010_causal_sets_non_existent_future_Rafael_Sorkin}.

We will begin with a number of definitions and symbolic notations in Section \ref{definitions}, which will finish with a summary of the concepts. In Section \ref{evolving_schemes} we will discuss what a classical co-event model would look like, observing key features of such a model and then generalising it to produce a basic evolving scheme for all (including both quantum and classical) stochastic systems. Section \ref{claims}'s focus is Theorem \ref{zero_hist_not_in_min_prec_prol_supp}, which claims that any co-event produced by the basic evolving scheme will not depend on histories with zero quantum measure, but the proof relies on a number of other claims, which may be of general interest and could be used for future proofs. We finish this section with a proof that the basic evolving scheme reduces to the classical evolving scheme. Finally, in Section \ref{more_evolving_schemes} we will explore how these claims carry over to more restrictive schemes, before finishing with a few unanswered questions for evolving schemes in Section \ref{discussion}.

\section{Definitions} \label{definitions}

Before we can properly introduce the basic evolving scheme, we will need a number of definitions and concepts. Now, whilst the objects defined in this section do not differentiate between classical or quantum systems, they are only defined here for systems that evolve in discrete stages that are ordered and can be correspondingly labelled by the natural numbers $t \in \mathbb{N}$. We require that at each stage, for every previous history there must exist some extension of that history to the new stage, i.e. for every ``$X$'' there must exist an ``$X$ and then ...'' at the next stage. Moreover, every history must be an extension of a previous one, i.e. there can not be a history with no past, unless it is the origin. Finally, for this paper we will only work with systems where the initial history space and number of such extensions is finite (and therefore the history space at all stages is finite). Also, note that in this paper $\Z{n}$ is only intended to represent the set of $n$ integers $\set{ 0,1, \dotsc,n-1 }$.

In order to aid with these definitions, we will occasionally refer to three different example systems:
\begin{itemize}
	\item Quantum $n$-site Hopper. This system consists of a single quantum particle -- the hopper -- on a discrete ring of $n$ sites \cite{2012_towards_a_fundamental_Rafael_Sorkin}. At each \textit{time} $\tau=0,1,\dotsc$ the hopper can change sites with transition amplitude $U_{ij}$, where $i$ labels the original site and $j$ the new one. In previous literature $U$ has a particular form, but for the purpose of this paper can be considered any unitary matrix that would in the state-vector approach to quantum mechanics correspond to the evolution matrix.
	\item Classical Random Walker in a Box. This system consists of a single classical random walker in a discrete $1d$ box, with only three spatial positions. At each time $\tau=0,1,\dotsc$ the walker can either choose to move left or right in the box with equal probability, but if it chooses right when it is already at the right-most site it will rebound onto the same site, and similarly for the left-most site. We have introduced this system to demonstrate that the work in this paper can be applied to classical systems and can return the expected classical results.
	\item Growing Labelled Causal Set. This system consists of a single labelled causal set that increases in size. A causal set is a locally-finite poset model for spacetime, where the order relation $a \prec b$ corresponds to the geometrical notion of $a$ is in the past of $b$. A labelled causal set is a poset with natural numbers attached to the elements, which is like putting coordinates on a spacetime. These growing causal sets are considered in Classical Sequential Growth (CSG) models \cite{1999_classical_sequential_growth_David_Rideout_and_Rafael_Sorkin}, where at each stage a single new element is added to the causal set above (in the partially ordered sense) or unrelated to existing elements. Moreover, such a model could also be incorporated into an eventual quantum theory of causal sets. The details of causal sets or CSG models are not important to understand this paper, we only mention it here for those interested in quantum gravity, and to highlight the general applicability of our theory.
\end{itemize}

\subsection{Histories and Events}

Since QMT is a realist theory, we want to describe the full evolution of our system up to some stage $t$. Such a description is referred to as a history of the system. For our $n$-site hopper, a history at stage $t$ would describe where the hopper is at each time $\tau$ from $\tau=0$ up until $\tau=t$. Note the subtle difference between ``time'' and ``stage'', the former being a parameter of the hopper's path, and the latter being a measure of how many steps our system has progressed by. For example, one such history at stage $2$ would be the path $ 1 \hopto 4 \hopto 3 $, see Figure \ref{histories_figure} for a visualisation of this history. The set of all such histories at stage $t$ is the history space.
\begin{definition}[History Space]
	The history space at stage $t$, denoted $\HS{t}$, for a given system is the finite set of all histories that describe the system's evolution (including its past) at stage $t$.
\end{definition}
\newlength{\LW}
\setlength{\LW}{0.002\textwidth}
\newlength{\La}
\setlength{\La}{0.08\textwidth}
\begin{figure*}[t!]
	\centering
	\footnotesize
	\newcommand*\addsite[3]{\node[circle,draw,line width=\LW] (site_#3) at (#1) {$#2$} }
	\newcommand*\addboldsite[3]{\node[circle,draw,line width=3\LW] (site_#3) at (#1) {$#2$} }
	\newcommand*\addtime[2]{\node[left,rectangle,draw,line width=\LW] at (#1) {$\tau=#2$} }
	\tikz{
		\addsite{0,0}{0}{00};
		\addboldsite{\La,0}{1}{01};
		\addsite{1.866025404\La,0.5\La}{2}{02};
		\addsite{\La,\La}{3}{03};
		\addsite{0,\La}{4}{04};
		\addsite{-0.866025404\La,0.5\La}{5}{05};
		\addtime{-1.2\La,0.5\La}{0} ;
		\draw[line width=\LW] (site_00) -- (site_01) -- (site_02) -- (site_03) -- (site_04) -- (site_05) -- (site_00);
		\addsite{0,1.5\La}{0}{16};
		\addsite{\La,1.5\La}{1}{17};
		\addsite{1.866025404\La,2\La}{2}{18};
		\addsite{\La,2.5\La}{3}{19};
		\addboldsite{0,2.5\La}{4}{110};
		\addsite{-0.866025404\La,2\La}{5}{111};
		\addtime{-1.2\La,2\La}{1} ;
		\draw[line width=\LW] (site_16) -- (site_17) -- (site_18) -- (site_19) -- (site_110) -- (site_111) -- (site_16);
		\addsite{0,3\La}{0}{212};
		\addsite{\La,3\La}{1}{213};
		\addsite{1.866025404\La,3.5\La}{2}{214};
		\addboldsite{\La,4\La}{3}{215};
		\addsite{0,4\La}{4}{216};
		\addsite{-0.866025404\La,3.5\La}{5}{217};
		\addtime{-1.2\La,3.5\La}{2} ;
		\draw[line width=\LW] (site_212) -- (site_213) -- (site_214) -- (site_215) -- (site_216) -- (site_217) -- (site_212);
	}
	\hspace{0.4\La}
	\tikz{
		\draw[line width=2\LW] (-0.4\La,-0.4\La) rectangle (2\La,0.4\La) ;
		\addtime{-0.5\La,0}{0};
		\addsite{0,0}{0}{00};
		\addsite{0.8\La,0}{1}{01};
		\addboldsite{1.6\La,0}{2}{02};
		\draw[line width=2\LW] (-0.4\La,0.6\La) rectangle (2\La,1.4\La) ;
		\addtime{-0.5\La,\La}{1};
		\addsite{0,\La}{0}{10};
		\addsite{0.8\La,\La}{1}{11};
		\addboldsite{1.6\La,\La}{2}{12};
		\draw[line width=2\LW] (-0.4\La,1.6\La) rectangle (2\La,2.4\La) ;
		\addtime{-0.5\La,2\La}{2};
		\addsite{0,2\La}{0}{20};
		\addboldsite{0.8\La,2\La}{1}{21};
		\addsite{1.6\La,2\La}{2}{22};
		\draw[line width=2\LW] (-0.4\La,2.6\La) rectangle (2\La,3.4\La) ;
		\addtime{-0.5\La,3\La}{3};
		\addboldsite{0,3\La}{0}{30};
		\addsite{0.8\La,3\La}{1}{31};
		\addsite{1.6\La,3\La}{2}{32};
	}
	\hspace{0.4\La}
	\tikz{
		\addsite{0,0}{0}{0};
		\addsite{0,\La}{1}{1};
		\addsite{\La,0}{2}{2};
		\addsite{-1.16\La,2.5\La}{3}{3};
		\addsite{-1.16\La,3.5\La}{4}{4};
		\addsite{0,2\La}{5}{5};
		\addsite{\La,\La}{6}{6};
		\addsite{0,3\La}{8}{8};
		\addsite{1.16\La,2\La}{7}{7};
		\addsite{0,4\La}{9}{9};
		\addsite{1.16\La,3\La}{10}{10};
		\draw[line width=\LW] (site_0) -- (site_1) -- (site_3) -- (site_4)
		(site_3) -- (site_9)
		(site_1) -- (site_7) -- (site_10)
		(site_1) -- (site_5) -- (site_8) -- (site_9)
		(site_2) -- (site_6);
	}
	\caption{\label{histories_figure}A visualisation of histories for three different systems. The left history is a member of $\HS{2}$ for the $6$-site hopper, and represents the path $1 \hopto 4 \hopto 3$. The middle history is a member of $\HS{3}$ for the random walker in a box, and represents the path $2 \hopto 2 \hopto 1 \hopto 0$, where between time $\tau=0$ and $\tau=1$ the walker bounced off the right-most wall. The right history is a member of $\HS{11}$ for the growing labelled causal set, visualised here as a Hasse diagram: two elements $a,b$, represented as nodes, are partially ordered as $a \prec b$ if there exists a path from $a$ to $b$ that only travels upwards. The numbers inside the nodes are the labels. For example, the element labelled $3$ is above $1$ and $0$, below $4$ and $9$, and not related to any other elements.}
\end{figure*}

We can formally define our $n$-site hopper history space at stage $t$ to be
\begin{equation}
	\setdef{\gamma}{\gamma:\Z{t+1} \to \Z{n}} \,,
\end{equation}
where $\gamma$ is a path and $\gamma(\tau)=i$ means the hopper is at site $i$ at time $\tau$. For the random walker, the system's evolution is very similar to the $3$-site hopper's so we might similarly choose the history space at stage $t$ to be
\begin{equation}
	\setdef{\gamma}{\gamma:\Z{t+1} \to \Z{3}} \,,
\end{equation}
where $\gamma(\tau)$ refers to the position of the walker at time $\tau$, see Figure \ref{histories_figure} for an example. However, we have included histories that would not be allowed in our description of the walker system, for example paths that contain the transition $0 \hopto 2$, which can not occur by just choosing left or right. To account for this we could make the history space smaller by excluding these histories. However, as in classical theories, we would hope that we can simply set the measure of such disallowed histories to zero, thus essentially excluding them from the theory. We shall see later than the basic evolving scheme will indeed give this desirable result.

Finally, for a theory of growing labelled causal sets the history space at stage $t$ would simply be the set of all labelled causal set that contain $t$ elements, see Figure \ref{histories_figure} for an example. Of note here, unlike the other two examples, the stage $t$ is \textit{not} related to ``what state is our system in at times $\tau \leq t$'', but instead to how big our spacetime is.

Now, returning to the $n$-site hopper, if our system's evolution is described by $1 \hopto 4 \hopto 3$ at stage $2$, then when we restrict ourselves to describing the system at stage $1$, then the corresponding history is $1 \hopto 4$. This introduces the concept of a restriction map from later to earlier stages.
\begin{definition}[Restriction Map]
	The restriction map at stage $t>0$ is a map 
	\begin{equation}
		\begin{aligned}
			\Prev:\HS{t} & \to \HS{t-1}
			\\ \gamma & \mapsto \gamma \Prev
		\end{aligned}
	\end{equation}
	that is surjective (onto), and physically relates each history to one previous history, which is its past.
\end{definition}

For systems like the $n$-site hopper or the random walker, where each stage corresponds to an absolute final time, the restriction map $\Prev$ would just remove from the history the last thing the system did. So in this case $1 \hopto 4 \hopto 3 \Prev = 1 \hopto 4$, and more generally, the restriction would be given by
\begin{equation}
	\gamma\Prev(\tau) = \gamma(\tau) \quad \forall \tau \in \Z{t} \, ,
\end{equation}
which is the same path as $\gamma$, but is only defined up until time $\tau=t-1$, so is missing the last step. In terms of the diagrams in Figure \ref{histories_figure}, it would correspond to removing the top sections.

For a theory of growing labelled causal sets, since the stage $t$ corresponds to the number of elements, the restriction map $\Prev$ acting on a growing labelled causal set would be the same labelled causal set minus the last element that was added and any corresponding relations, so for the labelled causal set in Figure \ref{histories_figure} we would remove the element labelled $10$ and its relations to $7$, $1$ and $0$.

Moreover, if $\gamma'$ restricts to $\gamma$, then it is natural to think of $\gamma'$ as an \textit{extension} of $\gamma$. The surjective property of the restriction guarantees that every history will have at least one extension to the next stage (no matter what a system has done it can always evolve further, provided we run it for another stage). Note that for the $n$-site hopper there are $n$ extensions for every $\gamma \in \HS{t}$ to the next stage, corresponding to the $n$ different sites the hopper could switch to. However, for the labelled causal set, and more generally, the number of extensions may vary with each $\gamma$.

Now, in QMT we are interested in answering what events do and do not happen. For example, we might be interested in whether our hopper is at site $i$ at time $\tau$. Mathematically, such an event is represented at stage $t$ by a collection of histories that all place the hopper at site $i$ and time $\tau$, i.e. the set
\begin{equation}
	E = \setdef{\gamma \in \HS{t}}{\gamma(\tau)=i}\,. \label{example_event}
\end{equation}
We can take this further and make any specifications about our system's evolution, and represent this mathematically as a collection of histories at stage $t$ that satisfy these conditions. Note that this is provided the conditions do not pertain to any details about the system's evolution that go beyond what is possible at the current stage. For example, if $\tau=3$ in the example event, then we can not represent this at stages $t=0$, $1$ or $2$ because there has been no such time yet, but once we reach stage $3$ or later we can begin to represent the event. Moreover, a statement like ``the hopper is at site $i$'' is \textit{not} an event because it has no anchor in time, which means it can not be constructed from histories. In general, the full set of events at stage $t$ is the event algebra.

\begin{definition}[Event Algebra]
	The event algebra at stage $t$, denoted $\EA{t}$, is the power set of $\HS{t}$\footnote{In general the event algebra in QMT is defined as the domain of the quantum measure, and is only required to be a Boolean subset of the power set of the history space, but since we are only dealing with finite history spaces in this paper we always choose the event algebra to be the full power set. For some issues related to infinite history spaces see \cite{2010_extending_quantum_measure_Fay_Dowker_and_Steven_Johnston_and_Sumati_Surya}.}. The elements of $\EA{t}$ are called events.
\end{definition}

Most events in $\EA{t}$ are far more complex to specify than in our previous example, but all are relevant to our theory.	Moreover, we can associate the event algebra with a Boolean ring by defining the binary operations
\begin{listeqn}
	A \cdot B := A \cap B \label{event_and}
	\AND A + B := (A \cup B) \setminus (A \cap B) \,, \label{event_xor}
\end{listeqn}
and the elements
\begin{listeqn}
	0 := \set{\ }
	\AND 1 := \Omega_t \,.
\end{listeqn}
Note that \eqref{event_and} is the event ``$A$ and $B$'' and \eqref{event_xor} is the event ``$A$ or $B$ but not both''. In addition, $A+A=0$ and $(1+A)$ is the complement of $A$.

Apart from the empty event $0$, the smallest events in our theory at stage $t$ are those that only contain one history. We call such events \textit{single-history events}. In this report, for convenience we will use $\gamma$ to represent both the history itself, and the single-history event $\set{\gamma} \in \EA{t}$, where the context will differentiate between the two meanings. In particular, we will often write $E+\gamma$ for some event $E \in \EA{t}$, which will mean
\begin{equation}
	E + \gamma = \begin{cases} E \setminus \set{\gamma} \quad & \text{if } \gamma \in E
		\\ E \cup \set{\gamma} \quad & \text{if } \gamma \not\in E \,.
	\end{cases}
\end{equation}

Now, we have discussed how we can mathematically represent physical conditions on a system's evolution using events from $\EA{t}$. The only restriction is that the stage $t$ must be late enough to capture all the details. But we could choose any other stage $t'>t$ and do the same. To switch from an event $E$ at stage $t-1$ to the same physically equivalent event at stage $t$, we take all the histories in $E$ and find all of their extensions to the next stage $t$. This defines the extension function.
\begin{definition}[Extension of an Event]
	The extension function for any $t>0$ is given by
	\begin{equation}
		\begin{aligned}
			\Extdef:\EA{t-1} & \to \EA{t}
			\\  E & \mapsto \Ext{E}:=\setdef{\gamma \in \HS{t}}{\gamma\Prev \in E} \,.
		\end{aligned}
	\end{equation}
\end{definition}
For our example event in \eqref{example_event} the extension is simply given by
\begin{equation}
	\Ext{E}=\setdef{\gamma \in \HS{t+1}}{\gamma(\tau)=i} \,.
\end{equation}
More generally, this physical equivalence is captured by the equivalence
\begin{equation}
	\gamma \in \Ext{E} \iff \gamma\Prev \in E  \,.\label{ext_E_is_phys_equiv_to_E}
\end{equation}
In fact, mathematically, $\Extdef$ is the \textit{preimage} for the function $\Prev$. It follows that for all events $A,B \in \EA{t}$
\begin{listeqn}
	\Ext{A+B}=\Ext{A}+\Ext{B} 
	\COM \Ext{A \cdot B} = \Ext{A} \cdot \Ext{B} 
	\COM \Ext{1} = 1
	\AND \Ext{0} = 0 \,.
\end{listeqn}

Now, the corresponding reverse operation from $\EA{t}$ to $\EA{t-1}$ would be to take all the histories in the event and restrict them to the previous stage. This defines the restriction map for events.
\begin{definition}[Restriction of an Event]
	Using the restriction map $\Prev:\HS{t} \to \HS{t-1}$ we define a corresponding restriction map on the event algebra for all $t>0$
	\begin{equation}
		\begin{aligned}
			\Prev : \EA{t} & \to \EA{t-1}
			\\ E & \mapsto	E \Prev = \bigcup \limits_{\gamma \in E} \gamma\Prev \, .
		\end{aligned}
	\end{equation}
\end{definition}
For our example event in \eqref{example_event}, if $\tau<t$ then
\begin{equation}
	E \Prev = \setdef{\gamma \in \HS{t-1}}{\gamma(\tau)=i} \,.
\end{equation}
However, this operation will in general remove any information about what our system did at the last stage, so if $\tau=t$ then in this case it restricts to
\begin{equation}
	E \Prev = \HS{t-1} \,.
\end{equation}

Again, mathematically this is the \textit{image} of the event $E$ under the function $\Prev$. It follows that for all events $A,B \in \EA{t}$
\begin{listeqn}
	\Ext{A\Prev} \supseteq A
	\AND A\Prev \cdot B\Prev \supseteq (A \cdot B) \Prev \,.
\end{listeqn}

These restriction and extension maps are important to understand for the rest of the paper, so they have been summarised graphically in Figure \ref{stage_operations_summary}.

\begin{figure*}[t!]
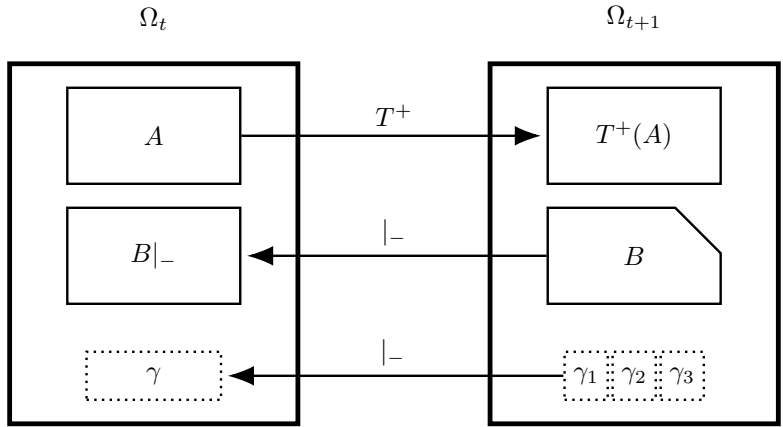

	\centering
	\tikz{
		\draw[line width=2\LW] (0,0) rectangle (3\La,3.75\La);
		\node [above] at (1.5\La,4\La) {$\HS{t}$};
		\node[rectangle,draw,line width=\LW, minimum width=1.8\La, minimum height=\La] (event_a) at (1.5\La,3\La) {$A$};
		\node[rectangle,draw,line width=\LW, minimum width=1.8\La, minimum height=\La] (event_b_prev) at (1.5\La,1.75\La) {$B\Prev$};
		\node[rectangle,draw,line width=\LW, minimum width=1.4\La, minimum height=0.5\La, dotted] (hist) at (1.5\La,0.5\La) {$\gamma$};
		\draw[line width=2\LW] (5\La,0) rectangle (8\La,3.75\La);
		\node [above] at (6.5\La,4\La) {$\HS{t+1}$};
		\node[rectangle ,draw,line width=\LW, minimum width=1.8\La, minimum height=\La] (event_a_ext) at (6.5\La,3\La) {$\Ext{A}$};
		\node[shape=chamfered rectangle, chamfered rectangle xsep=0.5\La, inner sep=0.3\La, chamfered rectangle corners=north east,draw,line width=\LW, minimum width=1.8\La, minimum height=\La] (event_b) at (6.5\La,1.75\La) {$B$};
		\node[rectangle,draw,line width=\LW, minimum width=0.4\La, minimum height=0.5\La,dotted] (hist_1) at (6\La,0.5\La) {$\gamma_1$};
		\node[rectangle,draw,line width=\LW, minimum width=0.4\La, minimum height=0.5\La,dotted] (hist_2) at (6.5\La,0.5\La) {$\gamma_2$};
		\node[rectangle,draw,line width=\LW, minimum width=0.4\La, minimum height=0.5\La,dotted] (hist_3) at (7\La,0.5\La) {$\gamma_3$};
		\draw[-{Latex[length=10, sep=2pt]}, line width=\LW] (event_a) to node [auto] {$\Extdef$} (event_a_ext);
		\draw[-{Latex[length=10, sep=2pt]}, line width=\LW] (event_b) to node [auto,swap] {$\Prev$} (event_b_prev);
		\draw[-{Latex[length=10, sep=2pt]}, line width=\LW] (hist_1) to node [auto,swap] {$\Prev$} (hist);
	}
	\caption{\label{stage_operations_summary}A graphical summary of the operations on events between the history space at stage $t$ and the history space at stage $t+1$. Events are represented by bold lines and single histories are dashed lines. Graphically, we can think of $\HS{t+1}$ as a copy of $\HS{t}$ with extra details about what the system does next. Therefore, we take the areas in $\HS{t}$ and copy them into $\HS{t+1}$ and divide them into $n$-segments for each history, where $n$ is the number of extensions the history has. Here the event $A$ is mapped to $\Ext{A}$, which contains the same details. The event $B$ is mapped back to a subset of $\HS{t}$ by $\Prev$, which contains less details. We depicted a corner missing in $B$ to emphasise that $B$ is always a subset of $\Ext{B\Prev}$. The history $\gamma$ here has three extensions: $\gamma_1$, $\gamma_2$ and $\gamma_3$, that all map back to $\gamma$ under $\Prev$.}
\end{figure*}
\subsection{Quantum Measure}

Now that we have the kinematic objects of our theory, we require some predictions of their expression. In histories approaches to quantum mechanics, this is given by the decoherence function, which measures the quantum interference between events by summing the interference between the histories in those events. In particular, in QMT we are interested in the diagonal of the decoherence function, which we call the quantum measure.

For this paper, we will refer to the object that measures the interference between pairs of histories as the decoherence matrix.
\begin{definition}[Decoherence Matrix]
	The decoherence matrix at stage $t$, denoted $d_t$, is a physically motivated matrix $d_t:\HS{t} \times \HS{t} \to \mathbb{C}$ that describes the quantum interference between the histories; is Hermitian with real and non-negative eigenvalues; satisfies the consistency condition
	\begin{equation}
		d_t(\gamma_1,\gamma_2) = \sum \limits_{\substack{\gamma'_1 \in \HS{t'} \\ \gamma'_1 \Prev = \gamma_1}} \sum \limits_{\substack{\gamma'_2 \in \HS{t'} \\ \gamma'_2 \Prev = \gamma_2}} d_{t'}(\gamma'_1, \gamma'_2) \,; \label{decoherence_consistency}
	\end{equation}
	and is normalised as
	\begin{equation}
		\sum \limits_{\gamma_1 \in \HS{t}} \sum \limits_{\gamma_2 \in \HS{t}} d_t(\gamma_1,\gamma_2) = 1\,.
	\end{equation}
\end{definition}
Both the consistency condition and the normalisation conditions are in place to ensure that the decoherence function, as given below, satisfies
\begin{equation}
	D_t(A,B)=D_{t+1}\bigl( \Ext{E} \bigr) \label{qmeasure_consistency}
\end{equation}
and
\begin{equation}
	D_t(\HS{t},\HS{t})=1
\end{equation}
respectively, with the former ensuring that physically equivalent events are always assigned the same quantum measure, and the latter ensuring that the quantum measure is a normalised probability measure for classical systems.

We can now give the definitions for the decoherence function and the quantum measure.
\begin{definition}[Decoherence Function]
	The decoherence function at stage $t$, denoted $D_t$, is
	\begin{equation}
		\begin{aligned}
			D_t:\EA{t} \times \EA{t} & \to \mathbb{C}
			\\ (A,B) & \mapsto D_t(A,B):= \sum \limits_{\gamma_1 \in A} \sum \limits_{\gamma_2 \in B} d_t(\gamma_1,\gamma_2)  \,.
		\end{aligned}
	\end{equation}
\end{definition}
\begin{definition}[Quantum Measure]
	The quantum measure at stage $t$, denoted $\mu_t$, is given by the diagonal of the decoherence functional
	\begin{equation}
		\begin{aligned}
			\mu_t:\EA{t} & \to \mathbb{R}
			\\ E & \mapsto \mu_t(E) := D_t(E,E) \,.
		\end{aligned}
	\end{equation}
\end{definition}
For classical systems, like the random walker in a box, the decoherence matrix is always diagonal. Specifically, for the random walker
\begin{equation}
	d_t(\gamma_1,\gamma_2) = \begin{cases}
		P(\gamma_1) \quad & \text{if }\gamma_1=\gamma_2
		\\ 0 \quad & \text{otherwise,}
	\end{cases}
\end{equation}
where $P(\gamma_1)$ is the probability for taking the path $\gamma_1$, which would be zero for any paths with $0 \hopto 2$ or $2 \hopto 0$ or $1 \hopto 1$ transitions, and would otherwise depend on the probability of starting at the site $\gamma_1(0)$ and an overall normalisation. This means that the decoherence functional for this random walker is given by
\begin{equation}
	D_t(A,B) = \begin{cases}
		\sum \limits_{\gamma \in A} P(\gamma) \quad & \text{if } A=B
		\\ 0 \quad & \text{otherwise.}
	\end{cases}
\end{equation}
Thus, the quantum measure for an event will just be the sum of probabilities for each history in the event, and will therefore act as a probability measure.

On the other hand, for quantum systems like the $n$-site hopper, the decoherence matrix will not be diagonal. In particular, for the $n$-site hopper the decoherence matrix is zero if the two histories $\gamma_1$ and $\gamma_2$ do not meet at time $\tau=t$. Otherwise, if $\gamma_1(t)=\gamma_2(t)$ then it is given by the complex inner product of their path amplitudes $a[\gamma]$, which is given by an initial amplitude $\psi_{\gamma(0)}$ and a product of $U_{ij}$ for each transition $i \hopto j$ made by the history:
\begin{equation}
	a[\gamma] = \psi_{\gamma(0)} \prod \limits_{\tau=0}^{t-1} U_{\gamma(\tau),\gamma(\tau+1)} \,.
\end{equation}
Thus, the decoherence functional is given by
\begin{equation}
	D_t(A,B) = \sum \limits_{\gamma_1 \in A} \sum \limits_{\gamma_2 \in B} a[\gamma_1] a[\gamma_2]^* \delta_{\gamma_1(t),\gamma_2(t)} \,.
\end{equation}
Note, for unitary systems like this, the delta term plays a key role is ensuring the consistency condition \eqref{decoherence_consistency} holds.

Now, if we choose the event $A=B=\text{``the hopper is at site $i$ at time $\tau$''}$ the decoherence functional will indeed return the ordinary quantum mechanics probability of a positive measurement at site $i$ at time $\tau$, given an initial wavefunction of $\psi$. However, this only covers a very small set of events in $\EA{t}$, and, generally, a decoherence functional that is not diagonal will produce a quantum measure that does \textit{not} obey the classical sum rule
\begin{equation}
	\mu_t(A+B)=\mu_t(A)+\mu_t(B)
\end{equation}
for disjoint events $A,B \in \EA{t}$. Therefore, as stated earlier, we can not interpret the quantum measure as a likelihood. However, the Hermitian nature and non-negative eigenvalues of $d_t$ guarantee that the quantum measure is real and non-negative for all events. Moreover, all quantum measures will still obey the weaker quantum sum rule \cite{1994_sum_rules_Rafael_Sorkin}
\begin{equation}
	\mu_t(A+B+C) = \mu_t(A+B)+\mu_t(A+C)+\mu_t(B+C) - \mu_t(A) - \mu_t(B) - \mu_t(C)
\end{equation}
for disjoint events $A,B,C \in \EA{t}$.

Whilst the quantum measure can not, in general, be interpreted as a measure of likelihood, in QMT we seek an alternative way to interpret the quantum measure. In classical theories with finite history spaces we view zero measure events as impossible ones. In QMT we often work from this single assumption, applied to any events with zero quantum measure. Therefore, we are often focussed on such events, which we call null events.
\begin{definition}[Null Event]
	An event $E\in \EA{t}$ is a null event if
	\begin{equation}
		\mu_t(E)=0 \,.
	\end{equation}
\end{definition}
The weaker conditions on the quantum measure allows systems to have many more null events than they could have classically because the interference between histories can lead to cancellations. Also, note that $\mu_t(0)=0$, so the empty set is always a null event.

For this paper, we will also be interested in null histories.
\begin{definition}[Null History]
	A history $\gamma \in \HS{t}$ is a null history if
	\begin{equation}
		d_t(\gamma,\gamma)=0 \,.
	\end{equation}
\end{definition}
Note that the condition for a null history is equivalent to the corresponding single history \textit{event} being a null event. In addition, we can follow the same method as in deriving \eqref{d_on_P_is_zero} later on, to show that
\begin{equation}
	d_t(\gamma,\gamma)=0 \implies d_t (\gamma,\gamma')=0 \quad \forall \gamma' \in \HS{t} \label{null_means_no_interference} \,.
\end{equation}
Therefore, null histories can not interfere with any other histories.

Finally, we note that it is often convenient to introduce the following vector space. 
\begin{definition}[Vector Space $V_t$]
	$V_t$, defined for each history space $\HS{t}$, is a $\vert \HS{t} \vert$ dimensional complex vector space. Each vector $v \in V_t$ associates a complex number to each history in $\HS{t}$, which we represent as a function $v:\HS{t} \to \mathbb{C}$. $V_t$ is spanned by the basis vectors $v_\gamma$, $\gamma \in \HS{t}$, defined as
	\begin{equation}
		v_\gamma(\gamma') := \begin{cases}
			1 \quad & \text{if } \gamma'=\gamma
			\\ 0 \quad & \text{otherwise.}
		\end{cases}
	\end{equation}
	We also define a corresponding vector for any event $E \in \EA{t}$, given by
	\begin{equation}
		v_E(\gamma) := \begin{cases}
			1 \quad & \text{if } \gamma \in E
			\\ 0 \quad & \text{otherwise.}
		\end{cases}
	\end{equation}
\end{definition}
Using the $v_\gamma$ as a basis, we can relate the decoherence matrix $d_t$ to a linear transformation $V_t \to V_t$, which we also label as $d_t$. Then the decoherence function is simply given by
\begin{equation}
	D_t(A,B) = {v_A}^\dagger d_t v_B \,.\label{decoherence_using_vectors}
\end{equation}

\subsection{Co-events}

The co-event approach to QMT aims to use the quantum measure to derive a set of co-events, which each represent a \textit{different} version of reality. A single co-event represents reality to the extent that it specifies completely which events happen and which events do not happen. Thus, a co-event is simply a map from the event algebra to $\Z{2}$, where an event $E$ is mapped to $0$ if it does not happen, and is mapped to $1$ if it does happen. We say that the co-event \textit{denies} $E$ in the first case, and \textit{affirms} $E$ in the second case. For evolving co-event schemes, there is a different event algebra at each stage $t$, and thus the co-event will describe the reality of our system up until stage $t$ only. The full set of all such co-events is the co-event space.
\begin{definition}[Co-event Space]
	The co-event space at stage $t$, denoted $\CS{t}$, is the set of all maps $\phi:\EA{t} \to \Z{2}$, where each map is referred to as a co-event.
\end{definition}
There are therefore $2^{\vert \EA{t} \vert}=2^{2^{\vert \HS{t} \vert}}$ different versions of realities at stage $t$, as far as a co-event representation is concerned. Similar to $\EA{t}$, We can give both the common co-domain $\Z{2}$ and $\CS{t}$ a Boolean ring structure. For $\Z{2}$, we define the usual binary operations for elements $a,b \in \Z{2}$ as
\begin{listeqn}
	a \cdot b := a \mathrel{\texttt{AND}} b
	\AND a + b := a \mathrel{\texttt{XOR}} b \,.
\end{listeqn}
For $\CS{t}$, we define the same operations as
\begin{listeqn}
	(\phi_1 \cdot \phi_2)(E) := \phi_1(E) \cdot \phi_2(E) 
	\AND (\phi_1 + \phi_2)(E) := \phi_1(E) + \phi_2(E) \,,
\end{listeqn}
and we take the elements $0,1 \in \CS{t}$ to be
\begin{listeqn}
	0(E) = 0 \quad \forall E \in \EA{t}
	\AND 1(E) = 1 \quad \forall E \in \EA{t}\,.
\end{listeqn}

Note that we have already restricted what a reality could be: the co-events are definite, not allowing for a middle state (either an event happens or it does not happen); and the co-events are complete functions, not allowing for an event to be not related to either `affirmed' or `denied', nor allowing an event to be related to both. In addition, in QMT we aim to describe our own reality by a single co-event, as opposed to an ensemble of co-events. In practice the theory will predict a number of possible co-events, one of which is the one that is actually expressed (or if you would rather, all of which are expressed, but each one is a distinct reality). These early restrictions are kept in place since we ultimately want a realist description that will, in the correct setting, return a classical realist description.

However, we have still allowed for counter-intuitive realities in our space. For example, there are plenty of co-events where even though the event $A$ is affirmed and the event $B$ is affirmed, the event ``$A$ and $B$''$=A \cdot B$ does not happen. This extra freedom will become necessary once we allow for quantum interference. However, there is still a subset of co-events in $\CS{t}$ that obey our classical intuition, and these are known a classical co-events.
\begin{definition}[Classical Co-event]
	For every history $\gamma \in \HS{t}$ there exists a corresponding classical co-event $\gamma^* \in \CS{t}$ given by
	\begin{equation}
		\gamma^*(E):= \begin{cases}
			1 & \quad \text{if } \gamma \in E
			\\ 0 & \quad \text{otherwise.}
		\end{cases}
	\end{equation}
\end{definition}
We can intuitively think of the history $\gamma$ that specifies the classical co-event $\gamma^*$ as \textit{the} history that happens, with every other event occurring iff it involves the history $\gamma$. For example, for the random walker at stage $2$, one possible classical co-event would be
\begin{equation}
	0 \hopto 0 \hopto 1  ^* \,.
\end{equation}
So we would say that ``the walker takes the path $0 \hopto 0 \hopto 1$'' in this reality, and moreover, the event ``the walker hit a wall before time $2$'' does happen and the event ``the walker reached site $2$ before time $2$'' does not happen in this reality.

These classical co-events obey the rules of classical logic, that is they all satisfy, for all $A,B \in \EA{t}$,
\begin{listeqn}
	 \gamma^*(A+B) =\gamma^*(A)+\gamma^*(B)
	\COM \gamma^*(A \cdot B) =\gamma^*(A) \cdot \gamma^*(B)
	\COM \gamma^*(0) =0
	\AND \gamma^*(1) =1 \,.
\end{listeqn}
In fact, these conditions show that classical co-events are homomorphisms from the Boolean $\EA{t}$ to the Boolean $\Z{2}$. Moreover, one can show that they are the full set of such homomorphisms.

Whilst the classical co-events are a special subset of $\CS{t}$, as we shall see in Section \ref{expansion_of_coevents}, all co-events can be expanded as a polynomial of classical co-events. For example, the non-classical co-event
\begin{equation}
	\phi={\gamma_1}^* + {\gamma_1}^* \cdot {\gamma_2}^* + {\gamma_3}^* \label{example_coevent}
\end{equation}
would evaluate $\gamma_1$ as
\begin{align}
	\phi(\gamma_1) & = 1 + 1 \cdot 0 + 0
	\\ & = 1
\end{align}
and $\gamma_1+\gamma_2$ as
\begin{align}
	\phi(\gamma_1+\gamma_2) & = 1 + 1 \cdot 1 + 0
	\\ & = 0 \,.
\end{align}
Each product of classical terms in the sum is known as a monomial co-event.
\begin{definition}[Monomial Co-event]
	For every event $E \in \EA{t}$ there exists a corresponding monomial co-event $E^* \in \CS{t}$ given by
	\begin{equation}
		E^* := \prod \limits_{\gamma \in E} \gamma^*
	\end{equation}
	with
	\begin{equation}
		0^* := 1 \,.
	\end{equation}
\end{definition}
Note that $E^*$ can be equivalently defined as
\begin{equation}
	E^*(A)=\begin{cases}
		1 & \quad \text{if } A \supseteq E
		\\ 0 & \quad \text{otherwise.}
	\end{cases} \label{E_star_of_A}
\end{equation}

Now, we expect to use the quantum measure to restrict our set of co-events, and thus restrict what versions of reality our theory predicts are possible. In particular, as we mentioned earlier, we hope to use the principle that null events can not happen. Co-events that obey this are known as preclusive co-events.
\begin{definition}[Preclusive Co-event]
	A co-event $\phi \in \CS{t}$ is \textit{preclusive} with respect to the quantum measure $\mu_t$ if
	\begin{equation}
		\mu_t(E)=0 \implies \phi(E)=0 \,.
	\end{equation}
\end{definition}
Another condition that we will place on our co-events is that they agree with the past. That is, if the co-event at stage $t-1$ affirmed the event $E$, then we would expect co-events at stage $t$ to also affirm the same physical event, which is represented as $\Ext{E}$ at this stage. To be able to make this comparison between co-events at stage $t$ and $t-1$, we define a restriction map for co-events.
\begin{definition}[Restriction of a Co-event]
	The restriction map on $\CS{t}$, for $t>0$, is given by
	\begin{equation}
		\begin{alignedat}{2}
			\Prev:\CS{t} & \to \CS{t-1} 
			\\ \phi & \mapsto  \phi \Prev: \null & \EA{t-1} & \to \Z{2}
			\\ & &  E  & \mapsto \phi\Prev(E):=\phi\bigl( \Ext{E} \bigr) \, .
		\end{alignedat}
	\end{equation}
\end{definition}
The restriction removes any information about events that can only exists from stage $t$ onwards, but entirely preserves the strict past of $\phi$. Therefore, if $\phi \Prev = \phi'$, we say that $\phi$ is a \textit{prolongation} of $\phi'$ because it entirely agrees with $\phi'$ about all events that pertain to stage $t-1$ or earlier. In general, $\phi'$ will have several prolongations. For example, if $\phi'$ established that ``the hopper is at site $i$ or $i'$ at time $\tau=t-1$'' does happen, then any prolongation $\phi$ of $\phi'$ would also confirm that this event occurs, but would be free to choose whether the event ``the hopper is at site $i$ at both times $\tau=t-1$ and $\tau=t$'' happens or not.

Note that
\begin{listeqn}
	(\phi_1 \cdot \phi_2)\Prev = \phi_1\Prev \cdot \phi_2\Prev
	\AND (\phi_1 + \phi_2)\Prev = \phi_1\Prev + \phi_2\Prev \,.
\end{listeqn}
In addition,
\begin{align}
	\gamma^*\Prev(E)& =\gamma^*\bigl( \Ext{E} \bigr)
	\\ & = \begin{cases}
		1 & \quad \text{if }\gamma \in \Ext{E}
		\\ 0 & \quad \text{otherwise}
	\end{cases}
	\\ & = \begin{cases}
		1 & \quad \text{if }\gamma\Prev \in E
		\\ 0 & \quad \text{otherwise}
	\end{cases} \BY{ext_E_is_phys_equiv_to_E}
	\\ & = {\gamma\Prev}^*(E) \,,
\end{align}
which means in general
\begin{equation}
	E^*\Prev={E\Prev}^* \,. \label{monomial_restricted}
\end{equation}
In terms of the co-event in \eqref{example_coevent}, if $\gamma_1\Prev=\gamma_2\Prev=\gamma$ and $\gamma_3\Prev=\gamma'$ then
\begin{align}
	\phi\Prev & = \gamma^* + \gamma^* \cdot \gamma^* + {\gamma'}^*
	\\ & = \gamma^* + \gamma^* + {\gamma'}^*
	\\ & = 0 + {\gamma'}^*
	\\ & = {\gamma'}^* \,.
\end{align}

Finally, in our evolving co-event schemes we will become interested in restricting our co-events to be as simple or small as possible, in order to push us towards classical co-events, which are specified by a single history. ``Simple or small'' will come to mean that the co-event depends on few histories, in the same sense that the real function $f(x,y,z) := x+y$ only depends on the variables $x$ and $y$. Now to establish this, we want a notion of a partial difference on co-events. In particular, since $\EA{t}$ has a Boolean structure, a co-event can be equivalently considered to be a Boolean function. Boolean functions are used in Logic and Computing, and from this field \cite{1959_boolean_difference_Sheldon_Akers} we can borrow their choice for a partial difference.
\begin{definition}[Partial Difference]
	The partial difference of $\phi \in \CS{t}$ with respect to a history $\gamma \in \HS{t}$ is given by
	\begin{equation}
		\begin{aligned}
			\dif{\phi}{\gamma}: \EA{t} & \to \Z{2}
			\\  E & \mapsto \dif{\phi}{\gamma}(E) := \phi(E) + \phi(E + \gamma) \,.
		\end{aligned} \label{dif_definition}
	\end{equation}
	The partial difference with respect to an event $E \in \EA{t}$ is given by
	\begin{equation}
		\dif{}{E} := \prod \limits_{\gamma \in E} \dif{}{\gamma} \,,
	\end{equation}
	with
	\begin{equation}
		\dif{}{0} := 1 \,.
	\end{equation}
\end{definition}
Note that by applying \eqref{dif_definition} iteratively we find that for any $A,E \in \EA{t}$
\begin{equation}
	\dif{\phi}{A}(E) = \sum \limits_{B \in \Pow{A}}\phi(E+B) \,.\label{dif_E_identity}
\end{equation}
In addition, for any co-events $\phi_1,\phi_2 \in \CS{t}$ and $\gamma \in \HS{t}$
\begin{listeqn}
	\dif{}{\gamma}(\phi_1+\phi_2) = \dif{\phi_1}{\gamma} + \dif{\phi_2}{\gamma}
	\AND \dif{}{\gamma}(\phi_1 \cdot \phi_2) = \phi_1 \cdot \dif{\phi_2}{\gamma} + \dif{\phi_1}{\gamma} \cdot \phi_2 + \dif{\phi_1}{\gamma} \cdot \dif{\phi_2}{\gamma}\,. \label{dif_mult_rule}
\end{listeqn}
Moreover, the following relations that hold for all histories $\gamma,\gamma' \in \HS{t}$ and events $E,A \in \EA{t}$, show that the partial difference acts on a co-event as you would expect a differential operator to act on real polynomials:
\begin{gather}
	\dif{{\gamma'}^*}{\gamma} = \begin{cases}
		1 & \quad \text{if }\gamma'=\gamma
		\\ 0 & \quad \text{otherwise,} \label{dif_hist_with_gamma}
	\end{cases}
	\\ \dif{E^*}{\gamma} = \begin{cases}
		(E+\gamma)^* & \quad \text{if } \gamma \in E
		\\ 0 & \quad \text{otherwise,}
	\end{cases} \label{dif_E_with_gamma}
	\\ \dif{E^*}{A}  = \begin{cases}
		(E+A)^* & \quad \text{if } E \supseteq A
		\\ 0 & \quad \text{otherwise.}
	\end{cases} \label{dif_event_with_event}
\end{gather}
For example, for the co-event in \eqref{example_coevent}
\noMdent
\begin{align}
	& \phi = {\gamma_1}^* + {\gamma_1}^*{\gamma_2}^* + {\gamma_3}^*
	\\ \IMP & \dif{\phi}{\gamma_1} = 1 + {\gamma_2}^* + 0
	\\ \IMP & \dif{\phi}{\gamma_1+\gamma_2} = 0+1+0 \,,
\end{align}
\reMdent
whilst
\begin{equation}
	\dif{\phi}{\gamma_4} = 0 \,.
\end{equation}

Now, the full set of histories that the co-event depends on is known as its support.
\begin{definition}[Support]
	The support of a co-event $\phi \in \CS{t}$ is the event \footnote{The definition given here is equivalent to the usual QMT definition of the support of a co-event, where the support is the smallest event $S$ such that $\phi(E)=\phi(E \cdot S)$ for all events $E$ \cite{2007_co-events_Rafael_Sorkin}.}
	\begin{equation}
		\supp{\phi} := \setdef[\bigg]{\gamma \in \HS{t}}{ \dif{\phi}{\gamma} \neq 0} \,.
	\end{equation}
\end{definition}
For our example co-event in \eqref{example_coevent} the support would be $\set{ \gamma_1,\gamma_2,\gamma_3 }$. Note, that for any $\gamma \not\in \supp{\phi}$
\noMdent
\begin{align}
	0 &= \dif{\phi}{\gamma}(E)
	\\  &= \phi(E) + \phi(E+\gamma)
	\\ \IMP \phi(E) &= \phi(E+\gamma) \,. \label{independent_of_histories_outside_support}
\end{align}
\reMdent
We can use this to add or remove histories to the argument of $\phi$, if they are not in the support of $\phi$. In particular, we can use this to show that
\begin{equation}
	S \supseteq \supp{\phi} \implies \phi(E \cdot S) = \phi(E)  \quad \forall E \in \EA{t}\,, \label{co-event_supported_on_S}
\end{equation}
which will be a useful relation later on.

\subsection{Summary of Definitions}
Since we have introduced many definitions, we have included a summary of the concepts.
\begin{itemize}
	\item $\HS{t}$: History Space at stage $t$.
	\item $\gamma \in \HS{t}$: A history that describes the system up to stage $t$.
	\item $\gamma \Prev$: The history restricted to the previous stage.
	\item Extension: A history is an extension of a previous history if it restricts to that previous history.
	\item $\EA{t}$: The event algebra, which is the power set of $\HS{t}$.
	\item $E \in \EA{t}$: An event, which is a collection of histories.
	\item $A \cdot B$: The event ``$A$ and $B$''.
	\item $A + B$: The event ``$A$ or $B$ but not both''.
	\item $\gamma \in \EA{t}$: The event that only contains the single history $\gamma$.
	\item $\Ext{E}$: The extension of the event $E$ to the next stage, representing the same physical event but using later histories.
	\item $E\Prev$: The event $E$ restricted to the previous stage, loosing details that can only be represented at the current stage.
	\item $d_t(\gamma_1,\gamma_2)$: The decoherence matrix that captures the interference between histories.
	\item $D_t(A,B)$: The decoherence function taken by summing over $d_t$ for all pairs of histories in $A$ and $B$.
	\item $\mu_t(E)$: The quantum measure, the diagonal of $D_t$ and a weak generalisation of a classical measure.
	\item Null Event: An event with zero quantum measure.
	\item Null History: A history who's corresponding event has zero quantum measure, which can not interfere with other histories.
	\item $v_\gamma$: The vector which maps $\gamma$ to $1$ and everything else to $0$.
	\item $v_E$: The vector which maps histories in $E$ to $1$ and everything else to $0$.
	\item $\CS{t}$: The co-event space at stage $t$.
	\item $\phi \in \CS{t}$: A co-event representation of reality that exactly describes which events happen and do not happen.
	\item Preclusive: A co-event is preclusive if all null events do not happen.
	\item $\gamma^*$: A co-event where the history $\gamma$ happens and everything else follows classically.
	\item $E^*$: A monomial co-event that is the product of classical co-events for each history in $E$.
	\item $\phi\Prev$: The co-event $\phi$ restricted to the previous stage, loosing details that can only be represented at the current stage.
	\item Prolongation: A co-event is a prolongation of a previous co-event if it restricts to that previous co-event.
	\item $\dif{\phi}{\gamma}$: A difference operation on $\phi$, measuring how the co-event changes with the history $\gamma$.
	\item $\dif{\phi}{E}$: The above operation repeated for all histories in $E$.
	\item $\supp{\phi}$: The support of $\phi$, which contains all the histories $\phi$ depends on.
\end{itemize}

\section{Evolving Co-event Schemes} \label{evolving_schemes}
We have defined the basic objects of our stochastic theory $(\HS{t},\Prev,\EA{t},\mu_t)$, and the objects and concepts that derive from them. We now wish to define a \textit{co-event scheme} that determines which co-events in $\CS{t}$ are the allowed ones, given our theory. To begin with, we introduce a classical evolving co-event scheme.

\subsection{Classical Evolving Scheme}

In our theory, classical systems are characterised by a decoherence matrix that is completely diagonal, as such the quantum measure acts like a classical measure. We want to define a classical evolving scheme that works for such systems, but also manifests the intuitive features of classical physics. We want our classical co-event scheme to produce at each stage a set of allowed realities that do not conflict with classical intuition. Whilst there may be multiple allowed co-events at a given stage, only \textit{one} will be `chosen' to be the reality that is \textit{actually} expressed or experienced by the system. We explicitly do not provide any mechanism for this choice. Only once stage $t$'s single co-event has been chosen/enacted can the scheme continue for stage $t+1$. Thus, our scheme will iteratively produce a sequence of co-events $\CovS{t}$.

The first condition we might expect from a classical co-event scheme is that the co-events are classical. As we have already seen, classical co-events $\phi$ are characterised by a single history $\gamma \in \HS{t}$ by
\begin{equation}
	\phi=\gamma^* \,.
\end{equation}

In addition to being classical, we would not want zero probability events to be allowed to occur in our classical co-event scheme, so we would require that the co-events are preclusive. We denote the full set of preclusive co-events at stage $t$ by
\begin{equation}
	\Prec{t} := \setdef[\big]{\phi \in \CS{t}}{\forall E \in \EA{t} \ \mu_t(E)=0  \implies  \phi(E)=0} \,.
\end{equation}

So far we have only referred to the conditions the co-events must obey at every stage. However, if we choose the classical co-event characterised by ``the walker takes the path $0 \hopto 0 \hopto 1$'' at stage $2$, then we would not want to then choose the classical co-event characterised by ``the walker takes the path $1 \hopto 2 \hopto 2 \hopto 1$'' at the next stage. We want our new co-event to agree with the previous co-event about what happened in the past, in other words we want the new co-event $\CovS{t}$ to be a prolongation of the previous one $\CovS{t-1}$. We denote the set of all prolongations of the co-event $\CovS{t-1}$ by
\begin{equation}
	\Prol{t}\bigl(\CovS{t-1}\bigr):=\setdef[\big]{\phi \in \CS{t}}{\phi\Prev=\CovS{t-1}} \,.
\end{equation}
For convenience, we introduce the following shorthand, which is used only when no explicit argument is given:
\begin{gather}
	\Prol{t}  \equiv \Prol{t}\bigl(\CovS{t-1}\bigr) \,,
	\\ \PrecProl{t}  \equiv \Prec{t} \cap \Prol{t} \,.
\end{gather}

Now, we can combine these conditions to form the classical evolving co-event scheme. For our initial stage $t=0$ we have no past to appeal to, so we choose the initial co-event $\CovS{0}$ from the set
\begin{equation}
	\ClassicalS{0} :=\setdef{\gamma^*}{\gamma \in \HS{0}} \cap \Prec{0} \,.
\end{equation}
Then, at each stage $t>0$ we choose one co-event $\CovS{t}$ from the set
\begin{equation}
	\Classical{t} :=\setdef{\gamma^*}{\gamma \in \HS{t}} \cap \PrecProl{t} \,,
\end{equation}
which would depend on the previous element in our sequence: $\CovS{t-1}$. The ``next'' in the superscript highlights that this set is only formed once we have made a choice for the previous co-event, and this is the set of realities that can be chosen next.

To simplify the expression of this set, we label the histories that are selected to represent the co-events $\CovS{t}$ by $\HistS{t}$ such that
\begin{equation}
	\CovS{t}={\HistS{t}}^*\,.
\end{equation}
Now, the classical co-event ${\gamma}^*$ is a prolongation of the classical co-event ${\gamma'}^*$ iff $\gamma \Prev = \gamma'$. Therefore, our initial expression can reduce to
\begin{equation}
	\Classical{t}=\setdef[\big]{\gamma^*}{\gamma \in \HS{t},\ \gamma\Prev=\HistS{t-1}} \cap \Prec{t}\,.
\end{equation}

Moreover, we can use that $\mu_t$ is a classical measure. This means that an event $E$ will be null iff all of its subsets are also null. Therefore, a classical co-event $\gamma^*$ will be preclusive iff $\gamma$ is not a null history. So our set of co-events finally reduces to
\begin{equation}
	\Classical{t}=\setdef[\big]{\gamma^*}{\gamma \in \HS{t},\ \gamma\Prev=\HistS{t-1} \text{ and } \mu_t(\gamma) \neq 0} \,.
\end{equation}
Note, this set will not be empty because necessarily $\mu_{t-1}(\HistS{t-1}) \neq 0$, which means
\noMdent
\begin{align}
	& \mu_{t}\bigl( \Ext[\big]{\HistS{t-1}} \bigr) \neq 0
	\\ \IMP & \sum \limits_{\gamma \in \Ext{\HistS{t-1}}} \mu_{t}(\gamma) \neq 0 \,,
\end{align}
\reMdent
so there must exist an extension of $\HistS{t-1}$ that is not null. Therefore, this scheme always allows for a choice of co-event at the next stage.

This final result is indeed intuitive, for example for the random walker in a box, if at stage $2$ we have the classical co-event characterised by ``the walker takes the path $0 \hopto 1 \hopto 2$'' then the classical co-event scheme would allow us to choose $0 \hopto 1 \hopto 2 \hopto 2$ to be the characteristic history at the next stage but would forbid the choice of $0 \hopto 1 \hopto 0 \hopto 1$ on the grounds it is not an extension of the previous history, and would also forbid the choice of $0 \hopto 1 \hopto 2 \hopto 0$ on the grounds that this history has zero measure and is therefore impossible.

Note that we will not be associating a probability with each allowed co-event, instead, one can think of the set of allowed co-events at each stage as all the \textit{possible} realities that could be expressed at that stage. Whilst an obvious measure on the space of classical co-events would be given by the measure of the corresponding histories, non-classical co-events are not tied one-to-one with histories, or even events, so there is no such obvious choice for a measure. Moreover, using a purely possibilistic approach avoids some philosophical issues related to ontological probabilities, such as the question ``is our universe a likely one?''. In addition, physical probabilities for events could still emerge from a possibilistic co-event theory through repeated trials \cite{2009_emergence_of_probabilities_Yousef_Ghazi-Tabatabai_and_Petros_Wallden}, for example a theory that predicts there is no allowed reality that gives an infinite series of coin flips with more than half the outcomes being tails would predict $\mathrm{Prob}(\mathrm{tail}) \leq {1 \over 2}$. 

Now, for quantum systems we may wish to adopt the same scheme. However, there exist quantum measures that would give an empty set of classical scheme co-events. For example, in the usual description of the $n$-site hopper, by stage $t \sim n$ all histories lie within some null event, rendering all classical co-events non-preclusive \cite{2017_n-site_nirvana_Fay_Dowker_and_Vojtech_Havlicek_and_Cyprian_Lewandowski_and_Henry_Wilkes}. This covering of the history space with null events is what gives systems like the three-slit or GHZ experiments their anti-classical-realist flavour. So as we transfer to quantum systems we will have to loosen either the preclusive condition or the classical co-event condition. The former approach has been attempted in the past \cite{1994_restricted_preclusion_Rafael_Sorkin}, but this paper will follow the more common latter approach.

\subsection{Basic Evolving Scheme} \label{basic_evolving_scheme}

Now, to construct a general evolving scheme that also works for quantum systems we may wish to simply drop the classical co-event condition and choose our co-events from $\PrecProl{t}$. However, in general this set will contain a whole swath of very complex co-events, yet we still want our scheme to return to the classical co-events when it is presented with a classical measure. So we want some pressure that pushes us towards classical co-events. Since one characteristic of classical co-events is that their support only contains one history (it only depends on the characteristic history that defines it), one way to do this is to require that the support is as small as possible. Specifically, we introduce the concept of a \textit{minimal support}.
\begin{definition}[Minimal Support]
	A co-event $\phi \in \CS{t}$ is said to have a more minimal support than the co-event $\phi' \in \CS{t}$ iff
	\begin{equation}
		\supp{\phi} \subset \supp{\phi'} \,.
	\end{equation}
	Using this as a partial ordering for co-events in some set $\varphi \subseteq \CS{t}$, a co-event is said to be minimally supported in $\varphi$ if it is a minimal element in this ordering. Thus, we further define the minimal support operation to be 
	\begin{equation}
		\MinSupp{\varphi} := \setdef[\big]{\phi \in \varphi}{\not \exists \phi' \in \varphi \st \supp{\phi'} \subset \supp{\phi}} \,,
	\end{equation}
	which leaves only the minimally supported co-events in $\varphi$.
\end{definition}

We can now present the basic evolving scheme, as given by Rafael Sorkin. This scheme chooses a sequence of co-events, as in the classical scheme, except we choose the new co-event $\CovS{t}$ to be a \textit{minimally supported preclusive prolongation} of the previous co-event $\CovS{t-1}$. So for each stage $t>0$ we choose the new co-event from
\begin{equation}
	\Basic{t}:= \MinSupp[\big]{\PrecProl{t}} \,,
\end{equation}
which again depends on our choice of $\CovS{t-1}$. We also need to choose an initial co-event $\CovS{0}$. It would seem reasonable to choose $\CovS{0}$ from
\begin{equation}
	\BasicS{0} := \MinSupp[\big]{\Prec{0} \cap \setdef{\phi \in \CS{0}}{\phi(1)=1}} \, , \label{basic_0_def}
\end{equation}
where the second condition ensures that the co-events will all affirm the event $1 = \HS{0} =$``something happens'', and therefore also ensures that we do not choose the trivial co-event $0$. Now, this evolving scheme is referred to as `basic' because it only places a few constraints on the allowed sequences of co-events. We will briefly explore other schemes that add additional restrictions in Section \ref{more_evolving_schemes}. Note that in choosing the new co-events to be prolongations of the previous co-event, as opposed to, say, approximate prolongations, we are not allowing for any disputes about the past or any crystallizing effect \cite{2010_crystallizing_block_George_Ellis_and_Tony_Rothman}.

For the purpose of this paper, it is actually helpful to rephrase the scheme using more global objects. We iteratively generate
\begin{equation}
	\BasicS{t} := \bigcup \limits_{\phi \in \BasicS{t-1}} \MinSupp[\big]{\PrecProl{t}(\phi)} \,,
\end{equation}
which does not depend on some choice of $\CovS{t-1}$, and $\BasicS{0}$ is the same as in \eqref{basic_0_def}. Then an \textit{expressible} sequence of co-events $\CovS{t}$ is one that obeys
\begin{listeqn}
	\CovS{t}  \in \BasicS{t}
	\AND \CovS{t} \Prev = \CovS{t-1} \,.
\end{listeqn}
The set of expressible sequences of co-events is equivalent to the set of all co-event sequences that would be generated by following the previous single-sequence method. This alternative presentation will be useful for exposing properties possessed by all co-events generated by our basic evolving scheme.

We can similarly re-express the classical scheme. We iteratively generate
\begin{equation}
	\ClassicalS{t} := \bigcup \limits_{\phi \in \ClassicalS{t-1}} \Classical{t}(\phi) \,. \label{classical_set_def}
\end{equation}
And again, an \textit{expressible} sequence of co-events $\CovS{t}$ is required to obey
\begin{listeqn}
	\CovS{t}  \in \ClassicalS{t}
	\AND \CovS{t} \Prev = \CovS{t-1} \,.
\end{listeqn}
Note that the expression in \eqref{classical_set_def} reduces to
\begin{equation}
	\ClassicalS{t} = \setdef{\gamma^*}{\gamma \in \HS{t}, \ \mu_t(\gamma) \neq 0} \,, \label{classical_coevents_not_null_hist}
\end{equation}
which is the set of classical outcomes that do not affirm null events. So, as one would expect, for our classical scheme, if a history is not impossible, then there exists an allowed version of reality where it happens.

Later on, in Corollary \ref{return_to_classical}, we will show that when $\mu_t$ is a classical measure
\begin{equation}
	\BasicS{t} = \ClassicalS{t}\,,
\end{equation}
which means that our basic evolving scheme can fully reproduce the classical scheme.

\section{The Exclusion of Null Histories: Some Claims} \label{claims}

From our previous investigation, it is clear that the classical scheme only allows non-null histories to occur. Moreover, if we were to add or remove null histories from our history space the set of allowed co-events would be the same. This feature of the classical evolving scheme is useful for two reasons: we are free to extend our history space for convenience without physically changing the theory provided the new histories are null, as we did for the random walker in a box, and we can freely forget about histories that break the constraints of our system like histories that break the laws of motion. Moreover, a theory that does \textit{not} have this feature would find it difficult to protect itself from the pathology of a continuous demand to increase and generalise the histories in case the mere presence of a new, but non-interfering, history changes the results.

Therefore, we would certainly hope that our basic evolving scheme could also exclude null histories without physically effecting the theory. In general, the co-events will not be classical, so we can not reduce them to just one history, but we know that they only depend on the histories in their support. Therefore, if no co-event in $\BasicS{t}$ will ever contain a null history in their support then we can say that our scheme does not depend on null histories. Indeed, as we shall show later, this is true for our basic evolving scheme. Moreover, one can carefully show that if we add or remove null histories from a theory, with the condition that null histories that are added or removed must also have strictly null extensions\footnote{The author is unaware of any physical theory that would produce null histories will non-null extensions. Moreover, such histories can not exist classically, so would not be considered in the same class as classical null histories, such as those that break the laws of motion. And so there is no motivation to want our theory to be independent of null histories with non-null extensions.} that are also added or removed for consistency, then the new theory will produce an identical $\BasicS{t}$. To show this, one only needs to map co-events from one theory to the other, just by changing domain, which is possible because the co-event's support will exist in both domains. Then one can check that the mapped co-event still satisfies the minimally supported preclusive prolongation condition in the new theory.

However, to complete the above arguments we still need to prove Theorem \ref{zero_hist_not_in_min_prec_prol_supp}: that null histories do not enter the support of our basic evolving co-events. It is surprisingly complex to get this result, but to get there we will expose some features of the basic evolving scheme, which may be useful for further investigation.

\subsection{Expansion of a Co-event} \label{expansion_of_coevents}

First, we will prove the result that any co-event can be expanded around any event using the difference operation. The idea of expanding the co-event as a \textit{polynomial} of homomorphic co-events, which is equivalent to the below expansion when $X=0$, comes from within quantum measure theory \cite{2007_co-events_Rafael_Sorkin}. But the idea of writing the co-event as a Taylor expansion around any event comes from the theory of Boolean functions \cite{1973_boolean_differential_calculus_Andre_Thayse_and_Marc_Davio}. This more general expansion around $X$ is not used in the rest of this paper, but could prove useful for future works.
\begin{theorem} \label{expansion_around_X}
	Any co-event $\phi \in \CS{t}$ can be expanded around any event $X \in \EA{t}$ as
	\begin{equation}
		\phi(A) = \sum \limits_{E \in \EA{t}} \dif{\phi}{E}(X) E^*(A+X) \,.
	\end{equation}
\end{theorem}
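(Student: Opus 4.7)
The plan is to reduce the statement to a straightforward mod-2 counting identity by substituting the two formulas already proved for $\dif{\phi}{E}$ and $E^*$. Applying \eqref{dif_E_identity} with the roles of $A$ and $E$ swapped gives
\begin{equation*}
    \dif{\phi}{E}(X) = \sum_{B \in \Pow{E}} \phi(X+B),
\end{equation*}
so the only $B$ that contribute are the subsets of $E$. At the same time, \eqref{E_star_of_A} tells us that $E^*(A+X) = 1$ precisely when $E \subseteq A + X$, and vanishes otherwise. Substituting both into the right-hand side of the theorem, the outer sum over all $E \in \EA{t}$ collapses to a double sum
\begin{equation*}
    \sum_{E \subseteq A+X} \sum_{B \subseteq E} \phi(X+B),
\end{equation*}
evaluated in $\Z{2}$.

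Next, I would swap the order of summation. As $E$ ranges over subsets of $A+X$ and $B$ over subsets of $E$, the set of $B$ that appear is exactly $\Pow{A+X}$. For a fixed such $B$, the multiplicity with which $\phi(X+B)$ occurs is $|\set{E : B \subseteq E \subseteq A+X}| = 2^{|A+X|-|B|}$. Taken modulo $2$ this vanishes unless $B = A+X$, in which case the multiplicity is $1$. The double sum therefore collapses to the single term $\phi\bigl(X + (A+X)\bigr) = \phi(A)$, which is the desired identity.

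The only real obstacle is the parity argument, but this is the standard Möbius-style collapse on the Boolean lattice in characteristic $2$, and once the two substitutions are made there is nothing else to do. A secondary stylistic choice is whether to prove the $X=0$ case first and then translate by setting $\psi(B) := \phi(X+B)$, using $\dif{\psi}{E}(0) = \dif{\phi}{E}(X)$ (which itself follows by iterating the single-history case $\dif{\psi}{\gamma}(B) = \dif{\phi}{\gamma}(X+B)$). This gives a slightly cleaner conceptual split -- a Taylor-like expansion about $0$ plus a translation -- but the one-shot direct calculation above is shorter and uses exactly the identities already in hand, so that is the route I would take.
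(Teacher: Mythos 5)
Your proof is correct, and it takes a genuinely different route from the paper's. The paper first establishes that every co-event has a \emph{unique} polynomial expansion $\phi = \sum_E f(E)\,E^*$ (via a linear-independence and counting argument over all of $\CS{t}$), then identifies the coefficients as $f(E)=\dif{\phi}{E}(0)$ using \eqref{dif_event_with_event}, and finally handles general $X$ by the translation $\phi'(A):=\phi(A+X)$ together with $\dif{\phi'}{E}(0)=\dif{\phi}{E}(X)$ — i.e.\ exactly the ``secondary stylistic choice'' you mention and decline. Your one-shot computation instead verifies the identity directly: substituting \eqref{dif_E_identity} and \eqref{E_star_of_A} into the right-hand side yields $\sum_{E\subseteq A+X}\sum_{B\subseteq E}\phi(X+B)$, and the interchange of summation with the multiplicity $2^{\vert A+X\vert-\vert B\vert}$ vanishing mod $2$ except at $B=A+X$ collapses everything to $\phi(A)$. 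What your approach buys is economy and self-containment — no appeal to uniqueness or to exhausting $\CS{t}$, just the two identities already proved and a standard parity collapse on the Boolean lattice. What it gives up is the uniqueness of the expansion itself, which the paper's argument delivers as a by-product and which is quietly relied upon later (e.g.\ in Lemma \ref{supp_is_events_in_polynomial} and Theorem \ref{prolongation_expansion_condition}, where distinct monomials are assumed not to cancel); if you took your route you would want to record that linear-independence fact separately.
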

\begin{proof}
	First we will prove this for the event $X=0$. First consider a co-event $\phi \in \CS{t}$ defined by
	\begin{equation}
		\phi := \sum \limits_{E \in \EA{t}} f(E) E^*
	\end{equation}
	for some function
	\begin{equation}
		f: \EA{t} \to \Z{2} \,.
	\end{equation}
	Suppose $\phi$ has another expansion in terms of the function $g: \EA{t} \to \Z{2}$:
	\noMdent
	\begin{alignat}{2}
		&& \phi &= \sum \limits_{E \in \EA{t}} g(E) E^*
		\\ &\IMP&  0 &= \sum \limits_{E \in \EA{t}}\bigl(f(E) + g(E)\bigr) E^* \,.
		\intertext{But each $E^*$ is a unique non-zero function for a given $E \in \EA{t}$. Therefore,}
		&& 0 & = f(E) + g(E) \quad \forall E \in \EA{t}
		\\ &\IMP \makebox[0pt][l]{$\displaystyle g=f\,.$} 
	\end{alignat}
	\reMdent
	Therefore, this expansion is unique. Since $f$ is essentially a co-event there are $\vert \CS{t} \vert$ such expansions, all of which are unique co-events, which exhausts all the co-events in $\CS{t}$.
	
	Now, to prove the form of the function $f$, consider
	\begin{align}
		\dif{\phi}{A}(0) &= \sum \limits_{E \in \EA{t}} f(E) \dif{E^*}{A}(0)
		\\ &= \sum \limits_{E \supseteq A} f(E)(E+A)^*(0) \BY{dif_event_with_event}.
		\intertext{But $B^*(0)=0$ if $B \neq 0$, otherwise $0^*(0)=1$. Therefore, the only potential non-zero term is when $E=A$, which means}
		\dif{\phi}{A}(0) &= f(A) \,.
	\end{align}
	Thus, the coefficients of the expansion are given by the partial difference\footnote{Inspired by the form of $f$, we can write down the operation $D:\CS{t} \to \CS{t}$, characterised by $D(\phi)(E)=\dif{\phi}{E}(0)$. $D$ is a dual operation, in other words $D(D(\phi))=\phi$. To see this consider the co-event expansion where $\phi(E)$ is the co-efficient of $E^*$, instead of $D(\phi)(E)$, and it will be equal to $D(\phi)$.}.

	Now, for a general $X$ consider $\phi'$ given by
	\begin{equation}
		\phi'(A)=\phi(A+X)\,. \label{phi_prime_def}
	\end{equation}
	Note that
	\begin{align}
		\dif{\phi'}{E}(0) &= \sum \limits_{A \in \Pow{E}} \phi'(A) \BY{dif_E_identity}
		\\ &= \sum \limits_{A \in \Pow{E}} \phi(A+X)
		\\ &= \dif{\phi}{E}(X) \BY{dif_E_identity} . \label{dif_phi_prime}
	\end{align}
	Therefore,
	\noMdent
	\begin{alignat}{2}
		&&\phi'(A) &= \sum \limits_{E \in \EA{t}} \dif{\phi'}{E}(0) E^*(A)
		\\ && &=\sum \limits_{E \in \EA{t}} \dif{\phi}{E}(X) E^*(A) \BY{dif_phi_prime}
		\\ &\IMP& \phi(A)& =\sum \limits_{E \in \EA{t}} \dif{\phi}{E}(X) E^*(A+X) \BY{phi_prime_def}.
	\end{alignat}
	\reMdent
\end{proof}

As one might expect, the support of a co-event, consists of exactly the histories that form its polynomial expansion.

\begin{lemma} \label{supp_is_events_in_polynomial}
	Consider any $\phi \in \CS{t}$ in its polynomial form
	\begin{equation}
		\phi = \sum_{i \in I} {E_i}^* \,, \label{simple_plynomial_expansion}
	\end{equation}
	where $I$ is a set of indices and each event $E_i$ is unique. Then
	\begin{equation}
		\supp{\phi}=\bigcup \limits_{i \in I} E_i \,.
	\end{equation}
\end{lemma}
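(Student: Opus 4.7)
The plan is to prove both set inclusions separately, using the derivative rules from equations (\ref{dif_hist_with_gamma})--(\ref{dif_event_with_event}) together with the uniqueness of the polynomial expansion that was established inside the proof of Theorem \ref{expansion_around_X}.

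First I would fix an arbitrary history $\gamma \in \HS{t}$ and use linearity of $\partial/\partial\gamma$ together with (\ref{dif_E_with_gamma}) to compute
\begin{equation}
    \dif{\phi}{\gamma} = \sum_{i \in I} \dif{{E_i}^*}{\gamma} = \sum_{\substack{i \in I \\ \gamma \in E_i}} (E_i + \gamma)^* \,.
\end{equation}
From this expression the inclusion $\supp{\phi} \subseteq \bigcup_i E_i$ is immediate: if $\gamma \notin E_i$ for every $i \in I$, the sum is empty and $\partial\phi/\partial\gamma = 0$, so $\gamma \notin \supp{\phi}$.

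For the reverse inclusion, suppose $\gamma \in E_j$ for some $j \in I$. Since the $E_i$ are pairwise distinct, the events $E_i + \gamma$ appearing in the sum above (those with $\gamma \in E_i$) are also pairwise distinct: if $E_i + \gamma = E_{i'} + \gamma$ then $E_i = E_{i'}$, forcing $i = i'$. Thus $\partial\phi/\partial\gamma$ is a sum of distinct monomials ${F}^*$ with at least one term present. The key step is now to invoke the uniqueness of the polynomial expansion established in the proof of Theorem \ref{expansion_around_X}: the monomials $\{F^* : F \in \EA{t}\}$ are $\Z{2}$-linearly independent as functions $\EA{t} \to \Z{2}$, because the coefficient function $f$ in $\phi = \sum_E f(E) E^*$ is uniquely determined by $\phi$. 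Applied here, this forces $\partial\phi/\partial\gamma \neq 0$, hence $\gamma \in \supp{\phi}$.

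The only subtle step is the appeal to linear independence of the $E^*$, but this is really just a restatement of the uniqueness half of Theorem \ref{expansion_around_X} and involves no new work. Everything else is a direct application of the derivative rules already recorded. I would therefore present the argument as two short paragraphs, one for each inclusion, with the computation of $\partial\phi/\partial\gamma$ shared between them.
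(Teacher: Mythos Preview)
Your proposal is correct and follows essentially the same approach as the paper: compute $\partial\phi/\partial\gamma$ term-by-term via \eqref{dif_E_with_gamma}, observe that the surviving monomials $(E_i+\gamma)^*$ are pairwise distinct because the $E_i$ are, and conclude non-cancellation from the uniqueness of the polynomial expansion. The paper packages the two inclusions into a single iff statement rather than two paragraphs, but the content is identical.
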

\begin{proof}
	Note that we can always write $\phi$ in the form of \eqref{simple_plynomial_expansion} by expanding around $0$ using Theorem \ref{expansion_around_X}, and only keeping the $E^*$ terms for which $\dif{\phi}{E}(0)=1$.

	Now, consider any $\gamma \in \HS{t}$, we find
	\begin{align}
		\dif{\phi}{\gamma} &= \sum_{i \in I} \dif{{E_i}^*}{\gamma}
		\\ &= \sum_{i \in I} \begin{cases}
			(E_i+\gamma)^* & \quad \text{if } \gamma \in E_i
			\\ 0 & \quad \text{otherwise}
		\end{cases} \BY{dif_E_with_gamma}.
	\end{align}
	Note that each $(E_i+\gamma)^*$ is unique for different $E_i$ and so these terms can not cancel. Therefore,
	\begin{equation}
		\dif{\phi}{\gamma} \neq 0 \iff \gamma \in \bigcup \limits_{i \in I} E_i \,.
	\end{equation}
\end{proof}
For example, consider the co-event from \eqref{example_coevent}:
\begin{equation}
	\phi={\gamma_1}^* + {\gamma_1}^* \cdot {\gamma_2}^* + {\gamma_3}^* \,.
\end{equation}
If we take the partial difference with respect to $\gamma_1$, $\gamma_2$ or $\gamma_3$ we get a non-zero answer, but if we take the partial difference with respect to some other history then we get $0$. So $\supp{\phi}=\set{\gamma_1,\gamma_2,\gamma_3}$, which consists of all the histories in its expansion.

If we are interested in a particular history $\gamma$, a useful trick, which we will use later on, is to split the co-event into terms that contain a $\gamma^*$ term and those that do not.

\begin{lemma} \label{split_coevent_wrt_hist}
	For any given $\gamma \in \HS{t}$, any co-event $\phi \in \CS{t}$ can be written as
	\begin{equation}
		\phi=\gamma^* \cdot \phi_1 + \phi_2 \,,
	\end{equation}
	where
	\begin{listeqn}
		\phi_1 :=\dif{\phi}{\gamma} \AND
		\phi_2 := \phi + \gamma^* \cdot \phi_1 \,.
	\end{listeqn}
	Moreover,
	\begin{listeqn}
		\supp{\phi_1} \subseteq (\supp{\phi} \setminus \gamma) \AND
		\supp{\phi_2} \subseteq (\supp{\phi} \setminus \gamma) \,.
	\end{listeqn}
\end{lemma}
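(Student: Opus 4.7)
The plan is to treat the two claims separately: first the decomposition identity, which should be a one-line check using the characteristic $2$ arithmetic of $\CS{t}$, and then the two support inclusions, which both follow from the behaviour of the partial difference operator established in the definitions section.

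First, I would observe that
\begin{equation*}
\gamma^* \cdot \phi_1 + \phi_2 = \gamma^* \cdot \phi_1 + (\phi + \gamma^* \cdot \phi_1) = \phi + 2 (\gamma^* \cdot \phi_1) = \phi,
\end{equation*}
since addition in $\CS{t}$ is in $\Z{2}$. So the decomposition is immediate from the definitions of $\phi_1$ and $\phi_2$.

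For the support of $\phi_1 = \dif{\phi}{\gamma}$, I would first note two general facts about the difference operator. Firstly, it is symmetric in its lower arguments, i.e.\ $\dif{}{\gamma_1}\dif{}{\gamma_2} = \dif{}{\gamma_2}\dif{}{\gamma_1}$, which is visible from equation \eqref{dif_E_identity} with $A = \set{\gamma_1,\gamma_2}$, or by direct computation. Secondly, the operator squares to zero: $\dif{}{\gamma}\dif{\phi}{\gamma}(E) = \phi(E) + 2\phi(E+\gamma) + \phi(E+\gamma+\gamma) = 0$, using $\gamma+\gamma=0$ in $\EA{t}$. Combining these, for any $\gamma' \in \HS{t}$ we have $\dif{\phi_1}{\gamma'} = \dif{}{\gamma'}\dif{\phi}{\gamma} = \dif{}{\gamma}\dif{\phi}{\gamma'}$, which is $0$ whenever $\gamma' \notin \supp{\phi}$, and is also $0$ when $\gamma' = \gamma$ by nilpotence. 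Hence $\supp{\phi_1} \subseteq \supp{\phi} \setminus \gamma$.

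For the support of $\phi_2$, I would expand $\dif{\phi_2}{\gamma'} = \dif{\phi}{\gamma'} + \dif{}{\gamma'}(\gamma^* \cdot \phi_1)$ and apply the product rule \eqref{dif_mult_rule} together with \eqref{dif_hist_with_gamma}, which tells us $\dif{\gamma^*}{\gamma'}$ is $1$ when $\gamma'=\gamma$ and $0$ otherwise. For $\gamma' \notin \supp{\phi}$ with $\gamma' \neq \gamma$, all three terms of the product rule vanish (using $\gamma' \notin \supp{\phi_1}$ from the previous paragraph), and the $\dif{\phi}{\gamma'}$ term also vanishes, so $\dif{\phi_2}{\gamma'}=0$. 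For the crucial case $\gamma' = \gamma$, the product rule yields $\dif{}{\gamma}(\gamma^* \cdot \phi_1) = \gamma^* \cdot \dif{\phi_1}{\gamma} + \phi_1 + \dif{\phi_1}{\gamma}$, and since $\gamma \notin \supp{\phi_1}$ this collapses to $\phi_1$, giving $\dif{\phi_2}{\gamma} = \phi_1 + \phi_1 = 0$. This closes the inclusion $\supp{\phi_2} \subseteq \supp{\phi} \setminus \gamma$.

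The only mildly delicate step is the $\gamma'=\gamma$ case for $\phi_2$, where one has to remember to apply the full three-term product rule of \eqref{dif_mult_rule} rather than a naive Leibniz rule, and to use the first support inclusion to kill the $\dif{\phi_1}{\gamma}$ terms; everything else is routine $\Z{2}$ bookkeeping.
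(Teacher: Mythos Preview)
Your proof is correct, but takes a different route from the paper. The paper works through the polynomial expansion of $\phi$ (Theorem~\ref{expansion_around_X}): it writes $\phi = \sum_{i} {E_i}^*$, splits the sum into terms with $\gamma \in E_i$ and terms with $\gamma \notin E_i$, factors $\gamma^*$ out of the first sum, and then reads off the support claims from Lemma~\ref{supp_is_events_in_polynomial}. Your approach instead stays entirely at the level of the difference operator, using commutativity $\dif{}{\gamma'}\dif{}{\gamma} = \dif{}{\gamma}\dif{}{\gamma'}$, nilpotence $\dif{}{\gamma}\dif{}{\gamma} = 0$, and the product rule \eqref{dif_mult_rule}. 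This is arguably cleaner in that it avoids invoking the polynomial expansion theorem altogether and makes the lemma logically independent of Theorem~\ref{expansion_around_X}; on the other hand, the paper's argument gives a concrete picture of what $\phi_1$ and $\phi_2$ look like as polynomials, which is helpful for the examples that follow and for intuition in later proofs.
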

\begin{proof}
	As in Lemma \ref{supp_is_events_in_polynomial}, consider $\phi$'s polynomial expansion in the form
	\begin{align}
		\phi &= \sum \limits_{i \in I} {E_i}^*
		\\ &= \sum \limits_{\substack{i \in I \st \\ \gamma \in E_i}} {E_i}^* + \sum \limits_{\substack{i \in I \st\\ \gamma \not \in E_i}} {E_i}^*
		\\ &= \gamma^* \sum \limits_{\substack{i \in I \st \\ \gamma \in E_i}} (E_i+\gamma)^* + \sum \limits_{\substack{i \in I \st\\ \gamma \not \in E_i}} {E_i}^* \label{long_expansion_of_phi}
		\\ &= \gamma^* \sum \limits_{i \in I} \dif{{E_i}^*}{\gamma} + \sum \limits_{\substack{i \in I \st\\ \gamma \not \in E_i}} {E_i}^* \BY{dif_E_with_gamma}
		\\ &= \gamma^* \dif{\phi}{\gamma} +  \sum \limits_{\substack{i \in I \st\\ \gamma \not \in E_i}} {E_i}^*
		\\ &= \gamma^* \phi_1 + \phi_2 \,.
	\end{align}
	Finally, note that the sums in \eqref{long_expansion_of_phi} are the polynomial expansions of $\phi_1$ and $\phi_2$ respectively, and neither contain any $\gamma^*$ terms. Therefore, by Lemma \ref{supp_is_events_in_polynomial}, neither $\phi_1$ nor $\phi_2$ contain $\gamma$ in their supports. Moreover, these sums are only made of events or subsets of events used in $\phi$'s polynomial expansion. Therefore, again by Lemma \ref{supp_is_events_in_polynomial}, both $\phi_1$ and $\phi_2$'s supports must be subsets of $\phi$'s support.
\end{proof}

For example, we can write the co-event from \eqref{example_coevent} as
\begin{alignat}{3}
	\phi& = {\gamma_1}^* & & \cdot  ( 1 + {\gamma_2}^*) & & +  {\gamma_3}^*
	\\ & =  {\gamma_2}^* && \cdot  {\gamma_1}^* && +  ({\gamma_1}^* + {\gamma_3}^*)
	\\ & = {\gamma_3}^* && \cdot  1 && +  ({\gamma_1}^* + {\gamma_1}^* \cdot {\gamma_2}^*)
	\\ & = {\gamma_4}^*  && \cdot  0 && + ({\gamma_1}^*  + {\gamma_1}^* \cdot {\gamma_2}^* + {\gamma_3}^*) \,.
\end{alignat}

Finally, we can show that the polynomial form of a prolongation is constrained by the co-event it restricts to.

\begin{theorem} \label{prolongation_expansion_condition}
	Consider any $\phi \in \CS{t-1}$ in its polynomial form
	\begin{equation}
		\phi = \sum_{i \in I} {E_i}^* \,,
	\end{equation}
	where $I$ is a set of indices and each event $E_i$ is unique. A co-event $\phi' \in \CS{t}$ is a prolongation of $\phi$ iff its polynomial expansion can be written (non-uniquely) in the form
	\begin{equation}
		\phi' = \sum_{i \in I} {A_i}^* + \sum_{j \in J} ({B_j}^* + {C_j}^*) \,,
	\end{equation}
	where $J$ is some set of indices, each event $A_i$, $B_j$ and $C_j$ is unique,
	\begin{listeqn}
		{A_i}\Prev =E_i
		\AND {B_j}\Prev = {C_j}\Prev \,.
	\end{listeqn}
\end{theorem}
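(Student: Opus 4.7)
The plan is to prove both implications using two ingredients already in the paper: the identity $E^*\Prev = {E\Prev}^*$ from \eqref{monomial_restricted}, and the uniqueness of the polynomial expansion of a co-event around $0$ (Theorem \ref{expansion_around_X}, as reformulated in Lemma \ref{supp_is_events_in_polynomial}).

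The ``if'' direction is a direct computation. Assume $\phi'$ has the claimed form. Since $\Prev$ on co-events is $\mathbb{Z}_2$-linear with respect to $+$ and $\cdot$, I apply it termwise and invoke \eqref{monomial_restricted} to get ${A_i}^*\Prev = E_i^*$ and ${B_j}^*\Prev + {C_j}^*\Prev = ({B_j}\Prev)^* + ({C_j}\Prev)^* = 0$, because $B_j\Prev = C_j\Prev$ and $x + x = 0$ in $\mathbb{Z}_2$. What remains is exactly $\sum_{i\in I} E_i^* = \phi$, so $\phi'$ is a prolongation.

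For the ``only if'' direction I start from the unique polynomial expansion $\phi' = \sum_{k \in K} F_k^*$ guaranteed by Theorem \ref{expansion_around_X}, where the events $F_k$ are pairwise distinct. Applying $\Prev$ and using \eqref{monomial_restricted} gives $\phi'\Prev = \sum_{k \in K} (F_k\Prev)^*$. Now I partition $K$ into equivalence classes under $k \sim k' \iff F_k\Prev = F_{k'}\Prev$. Within a class, all summands collapse to the same monomial, and pairs of equal monomials cancel in $\mathbb{Z}_2$; hence the class contributes that monomial to $\phi'\Prev$ exactly when its cardinality is odd. Matching this against the unique polynomial form $\phi = \sum_i E_i^*$ of $\phi'\Prev = \phi$, the classes of odd cardinality are in bijection with $I$ via their common image $E_i$, and every other class has even cardinality with image outside $\{E_i\}_{i\in I}$.

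From this I build the required expansion: for each $i\in I$, pick one representative of the odd class restricting to $E_i$ and call it $A_i$, so $A_i\Prev = E_i$; the remaining (even number of) elements of that class I pair off arbitrarily as $(B_j,C_j)$ pairs, with $B_j\Prev = C_j\Prev = E_i$. For each even class outside $\{E_i\}_{i\in I}$, pair off all its elements similarly. Distinctness of all $A_i$, $B_j$, $C_j$ is inherited from distinctness of the $F_k$'s in the unique expansion of $\phi'$. I do not expect any deep obstacle here; the only delicate point is the parity bookkeeping inside each fibre of $\Prev$, which must be argued carefully because working in $\mathbb{Z}_2$ allows otherwise-nontrivial terms to annihilate when they happen to agree after restriction.
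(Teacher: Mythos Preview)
Your proof is correct and follows essentially the same approach as the paper: both arguments reduce to applying \eqref{monomial_restricted} termwise to the polynomial expansion of $\phi'$, grouping the resulting monomials by their common restriction, and then invoking uniqueness of the polynomial form of $\phi$ to force the parity conditions. The paper packages both directions into a single computation by introducing an auxiliary residual sum $\sum_k D_k^*$ (your ``leftover'' odd-class terms with image outside $\{E_i\}$) and showing it must vanish, whereas you separate the two implications and phrase the bookkeeping via equivalence classes, but the content is the same.
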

\begin{proof}
	Consider any $\phi' \in \CS{t}$ with its polynomial expansion arranged as
	\begin{equation}
		\phi' = \sum_{i \in I'} {A_i}^* + \sum_{j \in J} ({B_j}^* + {C_j}^*) + \sum_{k \in K} {D_k}^* \,,
	\end{equation}
	where $I' \subseteq I$, $J$ and $K$ are some set of indices, all the events are unique,
	\begin{listeqn}
		{A_i}\Prev =E_i \label{A_restricted_is_E}
		\COM {B_j}\Prev = {C_j}\Prev \label{B_and_C_restricted_agree}
		\COM {D_k}\Prev \neq E_i \quad \forall i \in I\setminus I'
		\AND {D_k}\Prev \neq {D_{k'}}\Prev \quad \forall k' \neq k \,.
	\end{listeqn}
	Note that this is indeed general. To construct this we first find events in the polynomial expansion of $\phi'$ that restrict to $E_i$ for each $i \in I$, if there exists at least one we choose one (it does not matter which one) to be $A_i$, if we do not find one then necessarily $I' \subset I$. With the rest of the events in the polynomial expansion we pair them up (non-uniquely) such that their restrictions agree and label them $B_j$ and $C_j$, if there are none then $J=\emptyset$. Finally, the left over events in the polynomial expansion form $D_k$ and $K$, which necessarily can not be paired up with each other and do not restrict to any $E_i$ for $i$ outside of $I'$. Then this co-event's restriction is
	\begin{align}
		\phi'\Prev &= \sum_{i \in I'} {{A_i}^*}\Prev + \sum_{j \in J} \bigl({{B_j}^*}\Prev + {{C_j}^*}\Prev\bigr) + \sum_{k \in K} {{D_k}^*}\Prev
		\\ &=\sum_{i \in I'} {A_i\Prev}^* + \sum_{j \in J} \bigl({B_j\Prev}^* + {C_j\Prev}^*\bigr) + \sum_{k \in K} {D_k\Prev}^* \BY{monomial_restricted}
		\\ &=\sum_{i \in I'} {E_i}^* + \sum_{j \in J} 0 + \sum_{k \in K} {D_k\Prev}^* \BY{A_restricted_is_E} \text{ and \eqref{B_and_C_restricted_agree}}
		\\ &=\phi + \sum_{i \in I \setminus I'} {E_i}^* + \sum_{k \in K} {D_k\Prev}^* \,.
	\end{align}
	Now, note that the ${D_k\Prev}^*$ terms can not cancel with each other or the ${E_i}^*$ terms, and the ${E_i}^*$ can not cancel with each other because they are unique. Therefore, $\phi'$ is a prolongation of $\phi$ iff $I'=I$ and $K=\emptyset$.
\end{proof}

For example, we now know from Theorem \ref{prolongation_expansion_condition} that the co-event
\begin{equation}
	{\gamma_1'}^*\cdot {\gamma_1''}^* + {\gamma_1'}^* \cdot {\gamma_2'}^* \cdot {\gamma_2''}+ {\gamma_3'}^* + \bigl({\gamma_1''}^*\cdot {\gamma_2'}^*\cdot {\gamma_2''}^* + {\gamma_1'}^* \cdot {\gamma_2'}^*\bigr) + \bigl({\gamma_1'}^*\cdot {\gamma_4'}^* + {\gamma_1''}^*\cdot {\gamma_4'}^*\bigr)
\end{equation}
will be a valid prolongation of the co-event from \eqref{example_coevent} if
\begin{listeqn}
	\gamma_1' \Prev =\null  \gamma_1''\Prev  = \gamma_1
	\COM \gamma_2' \Prev =\null  \gamma_2''\Prev = \gamma_2
	\COM \gamma_3'\Prev = \gamma_3
	\AND \gamma_4'\Prev =\gamma_4 
\end{listeqn}
because the first three terms will restrict to produce the terms in the previous co-event and the terms inside the brackets will cancel when they are restricted. Note that each of the terms in the first bracket could be swapped with the second term and play the same role because they all restrict to the same monomial. Moreover, we can see that the prolongation will be at least as complex as the previous co-event, a side effect of this is that the support of a co-event can not reduce in size via prolongation. Therefore, a non-classical co-event can never produce a classical co-event through prolongation.

\subsection{Supports of the Evolving Scheme}

We can now begin to explore the properties of supports of co-events that are produced by the basic evolving scheme. We start with a claim given by Rafael Sorkin\footnote{Given during a discussion at Perimeter Institute in July 2016.}.
\begin{lemma} \label{aff_den_pair_intersect_supp}
	Given a co-event $\phi \in \CS{t}$ with $\supp{\phi}=S$,
	\begin{equation}
		\phi(A) \neq \phi(D) \implies (A+D)\cdot S \neq 0 \, .
	\end{equation}
\end{lemma}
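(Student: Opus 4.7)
The plan is to prove the contrapositive: if $(A+D)\cdot S = 0$, then $\phi(A) = \phi(D)$. The condition $(A+D)\cdot S = 0$ says that the symmetric difference of $A$ and $D$ contains no history in the support of $\phi$, which, intuitively, means $A$ and $D$ only disagree on histories $\phi$ does not care about. So $\phi$ ought to assign them the same value.

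To make this rigorous I would use equation \eqref{co-event_supported_on_S}, which says that for any $S \supseteq \supp{\phi}$ and any event $E$, we have $\phi(E) = \phi(E \cdot S)$. Applying this with our $S = \supp{\phi}$ gives
\begin{equation}
\phi(A) = \phi(A \cdot S), \qquad \phi(D) = \phi(D \cdot S).
\end{equation}
Then I would use the Boolean ring distributive identity
\begin{equation}
(A+D)\cdot S = A\cdot S + D\cdot S,
\end{equation}
so that the hypothesis $(A+D)\cdot S = 0$ immediately gives $A\cdot S = D \cdot S$, whence $\phi(A\cdot S) = \phi(D\cdot S)$ and thus $\phi(A) = \phi(D)$. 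Contrapositively, $\phi(A) \neq \phi(D)$ forces $(A+D)\cdot S \neq 0$, which is the claim.

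An equivalent but slightly longer route (in case one did not want to invoke \eqref{co-event_supported_on_S}) would be to write $D = A + (A+D)$, enumerate $A+D = \set{\gamma_1,\dotsc,\gamma_k}$, observe that each $\gamma_i \not\in S$ by assumption, and then apply \eqref{independent_of_histories_outside_support} iteratively:
\begin{equation}
\phi(D) = \phi(A + \gamma_1 + \dotsb + \gamma_k) = \phi(A + \gamma_1 + \dotsb + \gamma_{k-1}) = \dotsb = \phi(A).
\end{equation}

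There is no real obstacle here: the statement is essentially the Boolean-algebraic expression of the fact that a co-event only depends on histories in its support. The only mild subtlety is being comfortable with the Boolean ring identity $(A+D)\cdot S = A\cdot S + D\cdot S$ and with the fact that $(A+D)\cdot S = 0$ is equivalent to $A \cdot S = D \cdot S$, both of which follow immediately from the definitions in \eqref{event_and} and \eqref{event_xor}.
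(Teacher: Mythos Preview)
Your proof is correct and essentially identical to the paper's: both argue the contrapositive, use distributivity to get $A\cdot S = D\cdot S$ from $(A+D)\cdot S = 0$, and then invoke \eqref{co-event_supported_on_S} to conclude $\phi(A)=\phi(D)$. The paper just presents these four steps tersely in a single aligned display without the additional commentary or the alternative route via \eqref{independent_of_histories_outside_support}.
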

\begin{proof}
	\noMdent
	\begin{alignat}{2}
		&& (A+D)\cdot S &=0
		\\ &\IMP& A \cdot S &= D \cdot S
		\\ &\IMP& \phi(A \cdot S) &= \phi(D \cdot S)
		\\ &\IMP& \phi(A) &= \phi(D) \BY{co-event_supported_on_S} .
	\end{alignat}
\end{proof}

Now, the preclusive condition in the basic evolving co-event scheme forces every new co-event to deny the null events. In addition, the prolongation condition forces all the new co-events to agree with the previous co-event about previous events. So our scheme provides us with a set of events that must be denied, and another set of events that must be affirmed. Using this and Lemma \ref{aff_den_pair_intersect_supp} we can determine the supports of these new co-events. To do so, we define the following set of events:
\begin{gather}
	\FA{t} := \setdef[\big]{A \in \EA{t}}{\exists A' \in \EA{t-1} ,\ \CovS{t-1}(A')=1 \text{ and } A=\Ext{A'}} \,,
	\\ \FDprol{t} := \setdef[\big]{D \in \EA{t}}{\exists D' \in \EA{t-1} ,\ \CovS{t-1}(D')=0 \text{ and } D=\Ext{D'}}\,,
	\\ \FDprec{t} := \setdef{ P \in \EA{t}}{\mu_t(P)=0 } \,,
	\\ \FD{t} := \FDprec{t} \cup \FDprol{t} \,.
\end{gather}
So $\FA{t}$ is the set of all events that must be \textbf{A}ffirmed by the new co-event (by prolongation), $\FDprec{t}$ is the set of all null events that must be \textbf{D}enied by the new co-event through \textbf{preclusion}, and $\FDprol{t}$ is the set of all events that must be \textbf{D}enied by \textbf{prolongation}. Note that since extensions of null events are also null by \eqref{qmeasure_consistency}, if the previous co-event $\CovS{t-1}$ is preclusive then $\FDprec{t}$ and $\FDprol{t}$ will overlap, and importantly $\FA{t}$ and $\FD{t}$ do not overlap. These constructions lead us to our next claim.
\begin{lemma} \label{prec_prol_iff_supp_int}
	If $\CovS{t-1} \in \Prec{t-1}$, then for any $S \in \EA{t}$
	\begin{equation}
		\exists \, \phi \in \PrecProl{t} \text{ with } \supp{\phi} \subseteq S \iff  (A+D) \cdot S \neq 0 \quad \forall A \in \FA{t}, \ D \in \FD{t} \, .
	\end{equation}
\end{lemma}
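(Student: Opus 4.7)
The plan is to prove each direction separately. The forward direction should follow quickly from Lemma \ref{aff_den_pair_intersect_supp} once we recognise that the prolongation and preclusion conditions force any $\phi \in \PrecProl{t}$ to affirm every element of $\FA{t}$ and deny every element of $\FD{t}$. The backward direction requires constructing an explicit co-event, and the natural candidate is one whose value on $E$ depends only on $E \cdot S$.

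For the forward direction, assume $\phi \in \PrecProl{t}$ has $\supp{\phi} \subseteq S$, and take any $A \in \FA{t}$, $D \in \FD{t}$. Writing $A = \Ext{A'}$ with $\CovS{t-1}(A')=1$, the prolongation property gives $\phi(A) = \phi\Prev(A') = \CovS{t-1}(A') = 1$. The same argument applied to $D \in \FDprol{t}$ gives $\phi(D) = 0$, and for $D \in \FDprec{t}$ preclusion directly gives $\phi(D)=0$. Hence $\phi(A) \neq \phi(D)$, so Lemma \ref{aff_den_pair_intersect_supp} yields $(A+D) \cdot \supp{\phi} \neq 0$, and the inclusion $\supp{\phi} \subseteq S$ upgrades this to $(A+D) \cdot S \neq 0$.

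For the backward direction, assume the intersection condition and define
\begin{equation}
\phi(E) := \begin{cases} 1 & \text{if } E \cdot S = A \cdot S \text{ for some } A \in \FA{t}, \\ 0 & \text{otherwise.} \end{cases}
\end{equation}
I need to verify four things: (i) $\supp{\phi} \subseteq S$, (ii) $\phi$ affirms every $A \in \FA{t}$, (iii) $\phi$ denies every $D \in \FD{t}$, and (iv) $\phi\Prev = \CovS{t-1}$. Point (i) is immediate because for any $\gamma \notin S$ we have $(E+\gamma) \cdot S = E \cdot S$, so $\dif{\phi}{\gamma} = 0$. Point (ii) is by construction. Point (iii) is exactly where the hypothesis enters: $(A+D) \cdot S \neq 0$ is equivalent to $A \cdot S \neq D \cdot S$, so no $D \in \FD{t}$ can trigger the affirming case of $\phi$. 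From (iii), $\phi$ is preclusive (since $\FDprec{t} \subseteq \FD{t}$) and denies every extension of a previously denied event; combined with (ii) this yields (iv), showing $\phi \in \PrecProl{t}$.

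The main obstacle is checking that the construction is truly consistent --- that the defining hypothesis prevents any conflict between the affirming and denying requirements --- but this is built into the equivalence $(A+D)\cdot S \neq 0 \iff A \cdot S \neq D \cdot S$. A minor bookkeeping point is that the preclusivity of $\CovS{t-1}$ (assumed in the statement) is what ensures $\FA{t} \cap \FD{t} = \emptyset$ via the consistency condition \eqref{qmeasure_consistency} and the injectivity of $\Extdef$; otherwise the hypothesis would be vacuous.
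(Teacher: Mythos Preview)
Your proof is correct and follows essentially the same route as the paper. The forward direction is identical, and your constructed co-event in the backward direction coincides with the paper's: the paper writes it as a four-case definition (first reducing $E$ to $E\cdot S$, then affirming if $E\cdot S$ matches some $A\cdot S$, denying if it matches some $D\cdot S$, and denying otherwise), but since the last two cases both return $0$, this collapses exactly to your two-case formula. Your verification that affirming all of $\FA{t}$ and denying all of $\FD{t}$ forces $\phi\Prev=\CovS{t-1}$ is the same implicit step the paper takes when it asserts $\phi\in\PrecProl{t}$.
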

\begin{proof}
	First, consider a $\phi \in \PrecProl{t}$ with $\supp{\phi} \subseteq S$. Since $\phi$ is a prolongation it will affirm all the events in $\FA{t}$ and will deny all the events in $\FDprol{t}$. Moreover, since $\phi$ is also preclusive it will also deny all the events in $\FDprec{t}$. Therefore, $\phi$ denies all the events in $\FD{t}$ and affirms all the events in $\FA{t}$ (note that this is consistent, i.e. there is no overlap between $\FA{t}$ and $\FD{t}$ because the previous co-event was preclusive). Therefore,
	\noMdent
	\begin{align}
		& \phi(A) \neq \phi(D)\quad \forall A \in \FA{t}, \  D \in \FD{t}
		\\ \IMP &(A+D) \cdot \supp{\phi} \neq 0 \quad \forall A \in \FA{t}, \  D \in \FD{t} \BYlem{aff_den_pair_intersect_supp}
		\\ \IMP & (A+D) \cdot S \neq 0 \quad \forall A \in \FA{t}, \ D \in \FD{t} \,.
	\end{align}
	Now we will show the reverse. Consider an event $S$ such that
	\begin{alignat}{2}
		&& (A+D) \cdot S & \neq 0 \quad \forall A \in \FA{t}, \  D \in \FD{t}
		\\ &\IMP& A \cdot S & \neq D \cdot S \quad \forall A \in \FA{t}, \  D \in \FD{t} \, . \label{A_S_noteq_D_S}
	\end{alignat}
	\reMdent
	We can construct a particular co-event $\phi \in \CS{t}$ as
	\begin{align}
		\phi(E) = \begin{cases}
			\phi(E \cdot S) & \quad \text{if } E \not \subseteq S
			\\ 1 & \quad \text{if } \exists A \in \FA{t} \st A \cdot S = E
			\\ 0 & \quad \text{if } \exists D \in \FD{t} \st D \cdot S = E
			\\ 0 & \quad \text{otherwise.}
		\end{cases}
	\end{align}
	This definition is consistent since the first condition does not overlap with the last three, and the second and third conditions do not overlap by \eqref{A_S_noteq_D_S}.
	
	Now, consider any $\gamma \not \in S$. The construction of $\phi$ ensures that $\phi(E)=\phi(E \cdot S)$ for all events $E \in \EA{t}$. Therefore,
	\begin{align}
		\dif{\phi}{\gamma}(E) &= \phi(E) + \phi(E+\gamma)
		\\ &= \phi(E \cdot S) + \phi\bigl((E+\gamma) \cdot S\bigr)
		\\ &= \phi(E \cdot S) + \phi(E \cdot S)
		\\ &=0 \,.
	\end{align}
	Therefore, all the histories outside of $S$ are not in the support of $\phi$, which means $\supp{\phi} \subseteq S$. Moreover, $\phi$ affirms all events in $\FA{t}$ and denies all events in $\FD{t}$, and so is in $\PrecProl{t}$ as required. Note that we could have split and altered the ``otherwise'' condition without effecting the proof.
\end{proof}

We have shown that the support of a preclusive prolongation must overlap with all $(A+D)$ terms. We can now go one step further and characterise when this preclusive prolongation is also minimally supported.

\begin{lemma} \label{histories_in_min_supp}
	If $\CovS{t-1}\in \Prec{t-1}$ and $\phi \in \PrecProl{t}$ with $S=\supp{\phi}$, then
	\begin{equation}
		\phi \in \MinSupp[\big]{\PrecProl{t}} \iff \forall \gamma \in S \  \exists A \in \FA{t},\ D \in \FD{t} \st (A+D)\cdot S = \gamma \,.
	\end{equation}
\end{lemma}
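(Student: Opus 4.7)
The plan is to reduce both directions to Lemma \ref{prec_prol_iff_supp_int}, which characterises when a preclusive prolongation supported inside a given $S$ exists via the condition $(A+D)\cdot S \neq 0$ for all $A \in \FA{t}, D \in \FD{t}$. The key observation is that $\phi$ with $\supp{\phi}=S$ fails to be minimally supported exactly when we can shrink $S$ by at least one history while preserving the intersection condition, so it suffices to analyse what happens when a single $\gamma \in S$ is removed, i.e.\ replaced by $S + \gamma$.

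For the forward direction I would argue the contrapositive. Suppose there exists $\gamma \in S$ such that no pair $A \in \FA{t}, D \in \FD{t}$ satisfies $(A+D)\cdot S = \gamma$. Since $\phi \in \PrecProl{t}$ with support $S$, Lemma \ref{prec_prol_iff_supp_int} already gives $(A+D) \cdot S \neq 0$ for every such pair; combined with the assumption, each nonempty intersection $(A+D)\cdot S$ must contain some history other than $\gamma$, hence $(A+D)\cdot(S+\gamma) \neq 0$ for all $A \in \FA{t}, D \in \FD{t}$. Applying Lemma \ref{prec_prol_iff_supp_int} with $S' = S + \gamma \subsetneq S$ produces a $\phi' \in \PrecProl{t}$ with $\supp{\phi'} \subseteq S' \subsetneq S$, so $\phi$ is not minimally supported.

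For the reverse direction, assume the witness condition holds and suppose for contradiction that some $\phi' \in \PrecProl{t}$ has $\supp{\phi'} = S' \subsetneq S$. Pick any $\gamma \in S$ with $\gamma \notin S'$ and let $(A_\gamma, D_\gamma)$ be the guaranteed pair with $(A_\gamma + D_\gamma) \cdot S = \gamma$. Then
\begin{equation}
(A_\gamma + D_\gamma) \cdot S' = (A_\gamma + D_\gamma) \cdot S \cdot S' = \gamma \cdot S' = 0 \, ,
\end{equation}
since $\gamma \notin S'$. But Lemma \ref{prec_prol_iff_supp_int} applied to $\phi'$ demands $(A_\gamma + D_\gamma)\cdot S' \neq 0$, a contradiction.

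I do not anticipate a real obstacle: both directions are short unwrappings of Lemma \ref{prec_prol_iff_supp_int} using the fact that shrinking $S$ minimally means deleting a single history, and that a pair $(A,D)$ whose $(A+D)\cdot S$ equals the singleton $\gamma$ is precisely the witness that distinguishes removing $\gamma$ as impossible. The only small care needed is to consistently read $\gamma$ both as a history and as the single-history event $\{\gamma\} \in \EA{t}$, per the convention established earlier in the paper.
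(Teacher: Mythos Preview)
Your proposal is correct and follows essentially the same approach as the paper: both reduce the minimality question to Lemma~\ref{prec_prol_iff_supp_int} applied to the sets $S+\gamma$ obtained by deleting a single history. The paper packages this as a single chain of $\iff$ statements starting from the negation $\phi \notin \MinSupp[\big]{\PrecProl{t}}$ and then handles the dichotomy $(A+D)\cdot\gamma \in \{0,\gamma\}$ explicitly at the end, whereas you treat the two directions separately via contrapositive and contradiction; the logical content is the same.
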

\begin{proof}
	Note that if $S=0$ the claim is true since $\phi$ must be minimally supported and there are no histories in the support needed to satisfy the condition on the right-hand side. Now, considering $S\neq 0$, suppose
	\noMdent
	\begin{align}
		& \phi \not \in \MinSupp[\big]{\PrecProl{t}}
		\\ \IFF & \exists \phi' \in \PrecProl{t} \text{ with } \supp{\phi'} \subset S
		\\ \IFF & \exists \gamma \in S \st \exists \phi' \in \PrecProl{t} \text{ with } \supp{\phi'} \subseteq S+\gamma
		\\ \IFF & \exists \gamma \in S \st (A+D) \cdot (S+\gamma) \neq 0 \quad \forall A \in \FA{t},\ D \in \FD{t} \BYlem{prec_prol_iff_supp_int}.
	\end{align}
	Negating what we have shown so far, we find
	\begin{align}	
		& \phi \in \MinSupp[\big]{\PrecProl{t}}
		\\ \IFF & \forall \gamma \in S \  \exists A \in \FA{t},\ D \in \FD{t} \st (A+D) \cdot (S+\gamma) = 0
		\\ \IFF & \forall \gamma \in S \  \exists A \in \FA{t},\ D \in \FD{t} \st (A+D) \cdot S = (A+D) \cdot \gamma \,.
	\end{align}
	Now, the right-hand side of the equality is either the event $\gamma$ or $0$ because the event $\gamma$ only contains one history. However, suppose
	\begin{align}
		& \exists A \in \FA{t},\ D \in \FD{t} \st (A+D) \cdot S = 0
		\\ \IMP & \not \exists \phi' \in \PrecProl{t} \text{ with } \supp{\phi'} \subseteq S \BYlem{prec_prol_iff_supp_int}.
	\end{align}
	\reMdent
	This contradicts with $\phi$ itself being a preclusive prolongation. Therefore,
	\begin{equation}
		\phi \in \MinSupp[\big]{\PrecProl{t}} \iff  \forall \gamma \in S \  \exists A \in \FA{t},\ D \in \FD{t} \st (A+D) \cdot S = \gamma \,.
	\end{equation}
\end{proof}

Now, consider any co-event $\phi \in \CS{t}$. By the definition of the support, for each $\gamma \in \supp{\phi}$ there exists $E \in \EA{t}$ such that
\begin{align}
	1&= \dif{\phi}{\gamma}(E)
	\\ &= \phi(E) +\phi(E+\gamma)\,.
\end{align}
This means that for histories $\gamma$ in the support of $\phi$ there must exists an event $E$ such that either $E$ or $(E+\gamma)$ is affirmed, whilst the other event is denied. When $\phi$ is a preclusive prolongation, we can choose this event to be an event from $\FD{t}$. But as we shall see in the following theorem, if $\phi$ is also minimally supported then we can go further and choose this event to also be null. This relationship highlights the necessary role that null events play in the basic evolving scheme in bringing more histories into the support. 

\begin{theorem} \label{dif_prec_one_iterate}
	If $\CovS{t-1} \in \Prec{t-1}$, $\phi \in \MinSupp[\big]{\PrecProl{t}}$	and
	\begin{equation}
		\forall \gamma' \in \supp[\big]{\CovS{t-1}} \ \exists P \in \FDprec{t-1} \st \dif{\CovS{t-1}}{\gamma'}(P)=1\,,
	\end{equation}
	then
	\begin{equation}
		\forall \gamma \in \supp{\phi} \ \exists Q \in \FDprec{t} \st \dif{\phi}{\gamma}(Q)=1 \,.
	\end{equation}
\end{theorem}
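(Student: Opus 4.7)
The plan is to reduce the claim to the hypothesis on $\CovS{t-1}$, leveraging the characterisation of minimal support from Lemma \ref{histories_in_min_supp}. Fix any $\gamma \in S := \supp{\phi}$ and let $A \in \FA{t}$, $D \in \FD{t}$ be the events Lemma \ref{histories_in_min_supp} supplies, so $(A+D)\cdot S = \gamma$. A direct computation using \eqref{co-event_supported_on_S}, the preclusion of $\phi$, and the prolongation property gives
\begin{equation}
\dif{\phi}{\gamma}(D) \;=\; \phi(D) + \phi(D+\gamma) \;=\; \phi(D) + \phi(A) \;=\; 0 + 1 \;=\; 1,
\end{equation}
since $(D+\gamma)\cdot S = A \cdot S$. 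Thus whenever $D \in \FDprec{t}$ (the easy case), setting $Q := D$ already proves the claim.

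The substantive work is the case $D \in \FDprol{t} \setminus \FDprec{t}$. Here $A = \Ext{A'}$ and $D = \Ext{D'}$ with $\CovS{t-1}(A') = 1 \neq 0 = \CovS{t-1}(D')$, but $\mu_t(D) \neq 0$. Applying Lemma \ref{aff_den_pair_intersect_supp} at stage $t-1$ picks out a $\gamma_0 \in (A'+D') \cdot \supp{\CovS{t-1}}$, and the inductive hypothesis supplies $P \in \FDprec{t-1}$ with $\dif{\CovS{t-1}}{\gamma_0}(P) = 1$. Preclusion of $\CovS{t-1}$ forces $\CovS{t-1}(P) = 0$ and hence $\CovS{t-1}(P+\gamma_0) = 1$. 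Setting $Q := \Ext{P}$ produces a null event in $\FDprec{t}$ by \eqref{qmeasure_consistency}, and prolongation yields $\phi(\Ext{P}) = 0$ together with $\phi(\Ext{P} + \Ext{\gamma_0}) = 1$.

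The main obstacle is that $\Ext{\gamma_0}$ generally contains several histories, so what we have shown is only that toggling the \emph{set} $\Ext{\gamma_0} \cdot S$ changes $\phi(\Ext{P})$, not that toggling the single history $\gamma$ does. To descend to $\gamma$ I would iterate: at each step pick an unwanted extension $\tilde\gamma \in (\Ext{\gamma_0} \cdot S) \setminus \{\gamma\}$ and apply Lemma \ref{histories_in_min_supp} to $\tilde\gamma$, landing either in the easy case (a genuinely null witness) or producing a further null $Q_{\tilde\gamma}$ by the same lifting procedure from $\CovS{t-1}$; adding these null events modulo $2$ to $Q$ peels off the spurious histories one at a time while preserving nullity via \eqref{null_means_no_interference} and the quantum sum rule. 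The process terminates because $S$ is finite, and the combinatorial bookkeeping is precisely where the delicacy lies --- an equally plausible route is to work within the polynomial expansion supplied by Theorem \ref{prolongation_expansion_condition}, replacing $D'$ in the pair $(A', D')$ with a null representative obtained from a sum of the $P$'s furnished by the hypothesis, which converts the problem into an algebraic manipulation in the Boolean ring $\CS{t-1}$ before lifting via $\Extdef$.
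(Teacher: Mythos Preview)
Your easy case is correct and matches the paper's reasoning once $D$ is known to be null. The hard case, however, has a genuine gap.

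The central flaw is the claim that ``adding these null events modulo $2$ to $Q$ peels off the spurious histories one at a time while preserving nullity.'' In quantum measure theory the symmetric difference of two null events is \emph{not} null in general. Concretely, if $Q$ and $Q'$ are both null then $d_t v_Q = d_t v_{Q'} = 0$ (as in the paper's derivation of \eqref{d_on_P_is_zero}), and since $v_{Q+Q'} = v_Q + v_{Q'} - 2 v_{Q \cdot Q'}$ one finds $\mu_t(Q+Q') = 4\,\mu_t(Q \cdot Q')$, which need not vanish. The relation \eqref{null_means_no_interference} you cite concerns null \emph{histories}, not null events, and the quantum sum rule gives no help here. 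So your peeling mechanism does not produce a null witness for $\dif{\phi}{\gamma}$. A second, independent problem is that the $\gamma_0$ you extract from $(A'+D')\cdot\supp{\CovS{t-1}}$ via Lemma~\ref{aff_den_pair_intersect_supp} is not tied to $\gamma$ at all: there is no reason $\gamma$ should lie in $\Ext{\gamma_0}$, so the phrase ``unwanted extension $\tilde\gamma\in(\Ext{\gamma_0}\cdot S)\setminus\{\gamma\}$'' presupposes a relationship you have not established.

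The paper sidesteps both issues by a three-way case split on $\gamma$ according to the value of $\Ext{\gamma\Prev}\cdot S$ and whether $\gamma\Prev$ lies in $\supp{\CovS{t-1}}$. When $\Ext{\gamma\Prev}\cdot S \supset \gamma$, one shows directly that the $D$ supplied by Lemma~\ref{histories_in_min_supp} \emph{cannot} lie in $\FDprol{t}$ alone, forcing $D\in\FDprec{t}$ (your easy case). When $\Ext{\gamma\Prev}\cdot S=\gamma$ and $\gamma\Prev\in\supp{\CovS{t-1}}$, the hypothesis is applied with $\gamma_0=\gamma\Prev$, and the uniqueness assumption converts $\Ext{\gamma\Prev}$ into $\gamma$ modulo $S$ in one step --- no peeling needed. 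The remaining case, $\gamma\Prev\notin\supp{\CovS{t-1}}$, is handled by a contradiction argument: one writes $\phi=\gamma^*\cdot\phi_1+\phi_2$ via Lemma~\ref{split_coevent_wrt_hist} and shows that if $\dif{\phi}{\gamma}$ vanished on every null event then $\phi_2$ would itself be a preclusive prolongation with strictly smaller support, contradicting minimality of $\phi$. This last case is precisely the one your proposal does not reach, and it requires a genuinely different idea from the lifting you attempt.
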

\begin{proof}
	Let $S=\supp{\phi}$. To show that this is true for all $\gamma \in S$, we will split our proof into three cases:
	\begin{itemize}
		\item[1.]\hypertarget{case_1}{} $\Ext{\gamma\Prev}\cdot S \supset \gamma \,.$
		\item[2.]\hypertarget{case_2}{} $\Ext{\gamma\Prev}\cdot S = \gamma$ and $\gamma\Prev \in \supp[\big]{\CovS{t-1}} \,.$
		\item[3.]\hypertarget{case_3}{} $\Ext{\gamma\Prev}\cdot S = \gamma$ and $\gamma\Prev \not\in \supp[\big]{\CovS{t-1}} \,.$
	\end{itemize}
	First consider case \hyperlink{case_1}{1}. From Lemma \ref{histories_in_min_supp}, we know that $\exists A \in \FA{t}$ and $Q \in \FD{t}$ such that
	\begin{equation}
		(A+Q) \cdot S = \gamma\,. \label{A_and_Q_gives_hist}
	\end{equation}
	From the definition of $\FA{t}$, we know that there exists $A' \in \EA{t-1}$ such that $A=\Ext{A'}$. Now, suppose
	\begin{equation}
		Q \in \FDprol{t}\,, \label{suppose_Q_in_prol}
	\end{equation}
	then there also exists $Q' \in \EA{t-1}$ such that $Q=\Ext{Q'}$. Then
	\noMdent
	\begin{alignat}{2}
		&& \Ext{A'+Q'} \cdot S & = \gamma
		\\ &\IMP& \Ext{\gamma\Prev} \cdot \Ext{A'+Q'} \cdot S & = \Ext{\gamma\Prev} \cdot \gamma
		\\ &\IMP& \Ext[\big]{\gamma\Prev \cdot (A'+Q')} \cdot S & = \gamma \,.
	\end{alignat}
	\reMdent
	Now, the term in $\Ext{}$ is either $\gamma\Prev$ or $0$ because the event $\gamma\Prev$ only contains one history. However, if it was $0$ then the left-hand side could not agree with the right-hand side. Therefore, the term must be $\gamma\Prev$, thus
	\begin{equation}
		\Ext{\gamma\Prev} \cdot S  = \gamma \,.
	\end{equation}
	But this is a contradiction with the condition for case \hyperlink{case_1}{1}. Therefore, \eqref{suppose_Q_in_prol} is false, meaning $Q \in \FDprec{t}$. In addition,
	\begin{align}
		\dif{\phi}{\gamma}(Q) &= \phi(Q) + \phi(Q+\gamma)
		\\ &= \phi(Q+\gamma) \qquad \text{since }\phi \in \Prec{t}
		\\ &= \phi\bigl((Q+\gamma) \cdot S\bigr) \BY{co-event_supported_on_S}
		\\ &= \phi(A \cdot S) \BY{A_and_Q_gives_hist}
		\\ &= \phi(A) \BY{co-event_supported_on_S}
		\\ &= 1 \qquad \text{since } \phi \in \Prol{t} \,.
	\end{align}
	Now, consider case \hyperlink{case_2}{2}. Since $\gamma\Prev \in \supp{\CovS{t-1}}$, we know $\exists P \in \FDprec{t-1}$ such that
	\begin{align}
		1 &= \dif{\CovS{t-1}}{\gamma\Prev}(P)
		\\ &= \CovS{t-1}(P) + \CovS{t-1}(P+\gamma\Prev)
		\\ &= \CovS{t-1}(P+\gamma\Prev) \qquad \text{since }\CovS{t-1} \in \Prec{t-1}\,. \label{prevcov_affirmed_prec_plus_hist}
	\end{align}
	Let $Q=\Ext{P}$. By \eqref{qmeasure_consistency}, $Q$ is also null and therefore in $\FDprec{t}$. Now,
	\begin{align}
		\dif{\phi}{\gamma}(Q) & =\phi(Q) + \phi(Q + \gamma)
		\\ & = \phi(Q + \gamma) \qquad \text{since }\phi \in \Prec{t}
		\\ & = \phi(Q \cdot S + \gamma \cdot S) \BY{co-event_supported_on_S}
		\\ & = \phi(Q \cdot S + \Ext{\gamma\Prev} \cdot S) \qquad \text{by the first condition on case \hyperlink{case_2}{2}}
		\\ & = \phi(Q+ \Ext{\gamma\Prev}) \BY{co-event_supported_on_S}
		\\ & = \phi\bigl( \Ext{P+\gamma\Prev} \bigr)
		\\ & = \phi\Prev(P+\gamma\Prev)
		\\ & = \CovS{t-1}(P+\gamma\Prev) \qquad \text{since }\phi \in \Prol{t}
		\\ & = 1 \BY{prevcov_affirmed_prec_plus_hist}\,.
	\end{align}
	Finally, consider case \hyperlink{case_3}{3}. We will use Lemma \ref{split_coevent_wrt_hist} to split $\phi$ into
	\begin{equation}
		\phi=\gamma^*\cdot \phi_1 + \phi_2\,,
	\end{equation}
	where $\phi_1=\dif{\phi}{\gamma}$. From the same lemma we know that
	\begin{equation}
		\supp{\phi_2} \subseteq (S+\gamma) \,. \label{phi2_supp_is_smaller}
	\end{equation}
	Now, consider any $E \in \EA{t-1}$, then
	\noMdent
	\begin{alignat}{2}
		&\Mdentspace& \phi_2\Prev(E+\gamma\Prev) &= \phi_2\bigl( \Ext{E+\gamma\Prev} \bigr)
		\\ && &= \phi_2\bigl( \Ext{E} \cdot (S+\gamma) + \Ext{\gamma\Prev} \cdot (S+\gamma) \bigr) \BY{phi2_supp_is_smaller}\text{ and \eqref{co-event_supported_on_S}}
		\\ && &= \phi_2\bigl( \Ext{E} \cdot (S+\gamma) + \gamma+\gamma \bigr) \qquad \text{by the first condition on case \hyperlink{case_3}{3}}
		\\ && &= \phi_2\bigl( \Ext{E} \bigr) \BY{co-event_supported_on_S}
		\\ && &= \phi_2\Prev(E)
		\\ & \IMP \makebox[0pt][l]{$\displaystyle \dif{\phi_2\Prev}{\gamma\Prev} = 0 \,.$}\label{no_prev_hist_in_phi2_prev}
	\end{alignat}
	\reMdent
	Similarly, we can show by the same method that
	\begin{equation}
		\dif{\phi_1\Prev}{\gamma\Prev} = 0 \,.\label{no_prev_hist_in_phi1_prev}
	\end{equation}
	Now, by the second condition in case \hyperlink{case_3}{3}
	\begin{align}
		0 &= \dif{\CovS{t-1}}{\gamma\Prev}
		\\ &= \dif{\phi\Prev}{\gamma\Prev} \qquad \text{since }\phi \in \Prol{t}
		\\ &= \dif{}{\gamma\Prev}({\gamma\Prev}^*\cdot \phi_1\Prev+\phi_2\Prev) \BY{monomial_restricted}
		\\ &= \dif{{\gamma\Prev}^*}{\gamma\Prev} \cdot \phi_1\Prev + {\gamma\Prev}^* \cdot \dif{\phi_1\Prev}{\gamma\Prev} + \dif{{\gamma\Prev}^*}{\gamma\Prev} \cdot \dif{\phi_1\Prev}{\gamma\Prev} + \dif{\phi_2\Prev}{\gamma\Prev} \BY{dif_mult_rule}
		\\ &= 1 \cdot \phi_1\Prev + 0 + 0 + 0\BY{dif_hist_with_gamma}\text{, \eqref{no_prev_hist_in_phi1_prev} and \eqref{no_prev_hist_in_phi2_prev}} . \label{phi1_prev_is_zero}
	\end{align}
	Therefore,
	\begin{align}
		\phi_2\Prev &= \phi\Prev + {\gamma\Prev}^* \cdot \phi_1\Prev
		\\  &= \phi\Prev \BY{phi1_prev_is_zero}
		\\  &= \CovS{t-1} \qquad \text{since }\phi \in \Prol{t}\,.
	\end{align}
	Therefore, $\phi_2 \in \Prol{t}$. Now, suppose
	\begin{equation}
		\phi_1(P) := \dif{\phi}{\gamma}(P) = 0\quad \forall P \in \FDprec{t} \,. \label{phi2_is_prec}
	\end{equation}
	Then, for any $P \in \FDprec{t}$
	\begin{align}
		\phi_2(P) &= \phi(P) + \gamma^*(P) \cdot \phi_1(P)
		\\ &= 0 \qquad \text{since } \phi \in \Prec{t} \text{ and by \eqref{phi2_is_prec}.}
	\end{align}
	 Therefore, $\phi_2 \in \PrecProl{t}$, but $\supp{\phi_2} \subset S$ which is a contradiction with $\phi$ being a minimally supported preclusive prolongation. Therefore, \eqref{phi2_is_prec} is false, meaning $\exists Q \in \FDprec{t}$ such that
	\begin{equation}
		\dif{\phi}{\gamma}(Q)=1\,.
	\end{equation}
\end{proof}

\subsection{Null Events}

We can now prove the main claim we have been working towards.
\begin{theorem} \label{zero_hist_not_in_min_prec_prol_supp}
	\begin{equation}
		\mu_t(\gamma)=0 \implies \gamma \not\in \supp{\phi} \quad \forall \phi \in \BasicS{t} \,.
	\end{equation}
\end{theorem}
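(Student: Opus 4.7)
The plan is to derive the theorem from Theorem \ref{dif_prec_one_iterate} via induction on $t$, paired with the non-interference property \eqref{null_means_no_interference} of null histories. The central auxiliary statement I will establish, by induction on $t$, is exactly the hypothesis that Theorem \ref{dif_prec_one_iterate} demands:
\begin{equation*}
	(\star) \quad \forall \phi \in \BasicS{t}, \ \forall \gamma \in \supp{\phi}, \ \exists Q \in \FDprec{t} \st \dif{\phi}{\gamma}(Q) = 1 \,.
\end{equation*}
Once $(\star)$ is known at stage $t$, the target theorem follows in a few lines. Fix $\phi \in \BasicS{t}$ and suppose, for contradiction, that some $\gamma \in \supp{\phi}$ is null. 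Pick $Q$ as in $(\star)$. Preclusion of $\phi$ forces $\phi(Q) = 0$, and hence $\phi(Q + \gamma) = \phi(Q) + \dif{\phi}{\gamma}(Q) = 1$. But the null history $\gamma$ is annihilated by $d_t$ on account of \eqref{null_means_no_interference}, and since $v_{Q + \gamma} = v_Q \pm v_\gamma$ this gives, via \eqref{decoherence_using_vectors}, $\mu_t(Q + \gamma) = v_Q^\dagger d_t v_Q = \mu_t(Q) = 0$. Preclusion then also forces $\phi(Q + \gamma) = 0$, a contradiction.

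The inductive step for $(\star)$ is almost automatic. By definition, every $\phi \in \BasicS{t}$ is a minimally supported preclusive prolongation of some $\CovS{t-1} \in \BasicS{t-1}$, and $\BasicS{t-1} \subseteq \Prec{t-1}$. So the inductive hypothesis $(\star)$ at stage $t-1$, applied to this particular $\CovS{t-1}$, supplies exactly the precondition required by Theorem \ref{dif_prec_one_iterate} with this $\CovS{t-1}$ and $\phi$, and the theorem transports $(\star)$ to stage $t$.

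The actual substance is the base case $t = 0$, where I have to extract $(\star)$ directly from the minimality built into the definition of $\BasicS{0}$. I would argue by contradiction: fix $\phi \in \BasicS{0}$ and $\gamma \in \supp{\phi}$, and suppose $\dif{\phi}{\gamma}(P) = 0$ for every null $P$. Apply Lemma \ref{split_coevent_wrt_hist} to split $\phi = \gamma^* \cdot \phi_1 + \phi_2$, with $\phi_1 := \dif{\phi}{\gamma}$ and both $\supp{\phi_1}, \supp{\phi_2} \subseteq \supp{\phi} \setminus \gamma$. For any null $P$, preclusion of $\phi$ and the standing assumption give $\phi_2(P) = \phi(P) + \gamma^*(P) \cdot \phi_1(P) = 0$, so both $\phi_1$ and $\phi_2$ are preclusive. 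Evaluating at $1$ gives $\phi(1) = \gamma^*(1) \cdot \phi_1(1) + \phi_2(1) = \phi_1(1) + \phi_2(1) = 1$, so exactly one of $\phi_1, \phi_2$ affirms $1$; whichever does lies in $\Prec{0} \cap \setdef{\psi \in \CS{0}}{\psi(1) = 1}$ with support strictly smaller than $\supp{\phi}$, contradicting the minimality of $\phi$. The only genuine obstacle is this minimality argument at $t=0$; the inductive engine and the final null-interference trick are then routine.
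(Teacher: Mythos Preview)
Your proof is correct and follows essentially the same approach as the paper: establish the auxiliary statement $(\star)$ by induction using Theorem~\ref{dif_prec_one_iterate}, handle the base case via the splitting of Lemma~\ref{split_coevent_wrt_hist}, and then derive the theorem from $(\star)$ using the non-interference of null histories. The only cosmetic differences are that the paper uses the single co-event $\phi' := \phi_1 + \phi_2$ in the base case (rather than whichever of $\phi_1$, $\phi_2$ affirms $1$), and spells out $d_t v_Q = 0$ via the positive-semidefinite eigenvalue decomposition instead of reducing $\mu_t(Q+\gamma)$ directly to $\mu_t(Q)=0$ as you do.
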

\begin{proof}
	First consider
	\begin{align}
		\phi & \in \BasicS{0}
		\\ &= \MinSupp[\big]{\Prec{0} \cap \setdef{\phi' \in \CS{0}}{\phi'(1)=1}}
	\end{align}
	and any $\gamma \in \supp{\phi}$. We can split $\phi$ into two co-events using Lemma \ref{split_coevent_wrt_hist} as
	\begin{equation}
		\phi=\gamma^*\cdot \phi_1 + \phi_2 \,,
	\end{equation}
	where $\phi_1:=\dif{\phi}{\gamma}$, and we can consider the co-event
	\begin{equation}
		\phi' := \phi_1+\phi_2 \,.
	\end{equation}
	Now, suppose
	\begin{equation}
		\phi_1(P):=\dif{\phi}{\gamma}(P)=0 \quad \forall P \in \FDprec{0} \,. \label{dif_phi_on_prec_is_zero}
	\end{equation}
	Then for any $P \in \FDprec{0}$
	\begin{align}
		\phi'(P) &= \phi_1(P) + \phi_2(P)
		\\ &=\gamma^*(P) \cdot \phi_1(P) + \phi_2(P) \BY{dif_phi_on_prec_is_zero}
		\\ &= \phi(P)
		\\ &= 0 \qquad \text{since }\phi \in \Prec{0}\,.
	\end{align}
	Therefore, $\phi' \in \Prec{0}$. Moreover,
	\begin{align}
		\phi'(1) &= \phi_1(1)+\phi_2(1)
		\\ &= 1 \cdot\phi_1(1)+\phi_2(1)
		\\ &= \gamma^*(1) \cdot \phi_1(1)+\phi_2(1)
		\\ & = \phi(1)
		\\ &= 1 \,.
	\end{align}
	So $\phi'$ is also a preclusive co-event that affirms $1$. Moreover, we know from Lemma \ref{split_coevent_wrt_hist} that $\supp{\phi_1} \subset \supp{\phi}$ and $\supp{\phi_2} \subset \supp{\phi}$, meaning $\supp{\phi'} \subset \supp{\phi}$. Therefore, $\phi$ is not minimally supported amongst all preclusive co-events that affirm $1$, meaning it is not in $\BasicS{0}$, which is a contradiction. Therefore, \eqref{dif_phi_on_prec_is_zero} is false, meaning
	\begin{equation}
		\forall \phi \in \BasicS{0} \ \forall \gamma \in \supp{\phi} \ \exists P \in \FDprec{0} \st \dif{\CovS{0}}{\gamma}(P)=1 \,.
	\end{equation}
	And since, for $t>0$,
	\begin{equation}
		\BasicS{t} = \bigcup \limits_{\phi \in \BasicS{t-1}} \MinSupp[\big]{\PrecProl{t}(\phi)}
	\end{equation}
	we can apply Theorem \ref{dif_prec_one_iterate} iteratively to find that
	\begin{equation}
		 \forall \phi \in \BasicS{t} \ \forall \gamma \in \supp{\phi} \ \exists P \in \FDprec{t} \st \dif{\phi}{\gamma}(P)=1 \,. \label{basic_scheme_hist_prec}
	\end{equation}

	Now, consider any null event $P \in \FDprec{t}$. Then
	\begin{align}
		0 &= \mu_t(P)
		\\ &= {v_P}^\dagger d_t v_P \qquad \text{using \ref{decoherence_using_vectors}}
		\\ & = \sum \limits_{i,j} {a_j}^* {u_j}^\dagger d_t a_i u_i
		\\ & = \sum \limits_i \underbrace{\lambda_i}_{\geq 0} \underbrace{\vert a_i \vert^2}_{\geq 0} \,.
	\end{align}
	Where we expanded $v_P$ in terms of the Hermitian $d_t$'s orthogonal eigenvectors $u_i$ with expansion coefficients $a_i$, and $\lambda_i$ are the corresponding eigenvalues, which are non-negative by the definition of the decoherence matrix. Since all the terms in the sum are non-negative, the equality can only hold if $a_i=0$ when $\lambda_i \neq 0$ and vis versa. Therefore,
	\begin{align}
		d_t v_P &= \sum \limits_{i} a_i d_t u_i
		\\ &= \sum \limits_{i} a_i \lambda_i u_i
		\\ &= 0 \,, \label{d_on_P_is_zero}
	\end{align}
	where the last $0$ is the zero vector. Similarly, consider any null history $\gamma \in \HS{t}$, then
	\begin{equation}
		d_t v_\gamma = 0 \label{d_on_gamma_is_zero} \,.
	\end{equation}
	Now, consider the quantum measure for the event $(P+\gamma)$, given by
	\begin{align}
		\mu_t(P+\gamma) &= \begin{cases}
			(v_P - v_\gamma)^\dagger d_t (v_P - v_\gamma) & \quad \text{if }\gamma \in P
			\\ (v_P + v_\gamma)^\dagger d_t (v_P + v_\gamma) & \quad \text{if } \gamma \not \in P
		\end{cases}
		\\ &= 0 \BY{d_on_P_is_zero} \text{ and \eqref{d_on_gamma_is_zero}\,.}
	\end{align}
	Therefore, $(P+\gamma) \in \FDprec{t}$, and
	\begin{align}
		\dif{\phi}{\gamma}(P) &= \phi(P) + \phi(P+\gamma)
		\\ &= 0 + 0 \qquad \text{since } \phi \in \Prec{t} \,.
	\end{align}
	Note that this is true for all null $P$ and $\gamma$, so comparing this result with \eqref{basic_scheme_hist_prec} we can see that no null history can be in the support of any co-event in $\BasicS{t}$.
\end{proof}

Note that the proof would still work if we also included co-events $\CovS{0}$ outside of $\BasicS{0}$, but still with the property that
\begin{equation}
	\forall \gamma \in \CovS{0} \ \exists P \in \FDprec{0} \st \dif{\CovS{0}}{\gamma}(P)=1 \,.
\end{equation}

This proof has shown that, as we hoped, null \textit{histories}, which are non-interfering objects, do not hold any sway in our basic evolving scheme. However, null \textit{events} that contain more than one history can play a key role in determining the complexity of the co-events produced by the basic evolving scheme. The next theorem will show that, if at stage $t$ all the null events are only events that existed previously, up to some additional non-interfering null histories, then the basic evolving scheme's co-events will maintain the polynomial structure of the previous co-event. Therefore, increasing the structural complexity of an evolving co-event necessarily depends on the existence of new null events that arise through interference.
\begin{theorem} \label{no_new_precs}
	Consider $\CovS{t-1} \in \BasicS{t-1}$. Let $S^{(t-1)}=\supp[\big]{\CovS{t-1}}$ and write $\CovS{t-1}$ in terms of its polynomial expansion
	\begin{align}
		\CovS{t-1} &= \sum \limits_{i \in I} {E_i}^*
		\\ &= \sum \limits_{i \in I} \prod \limits_{\gamma \in E_i} \gamma^*\,,
	\end{align}
	where $I$ is some set of indices and each $E_i$ is unique. If the quantum measure at stage $t$ obeys
	\begin{equation}
		\mu_t(P)=0 \implies \exists P' \in \EA{t-1} \st  \mu_{t-1}(P')=0 \text{ and } \mu_t(\gamma)=0 \ \forall \gamma \in (P + \Ext{P'})\,,
	\end{equation}
	then
	\begin{equation}
		\MinSupp[\big]{\PrecProl{t}} = \setdef[\bigg]{ \sum \limits_{i \in I} \prod \limits_{\gamma \in E_i} {e(\gamma)}^*}{e: S^{(t-1)} \to \HS{t} \st {e(\gamma)}\Prev =\gamma \text{ and } \mu_t(e(\gamma)) \not = 0} \,.
	\end{equation}
\end{theorem}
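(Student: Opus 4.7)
The plan is to prove both inclusions. For ``$\supseteq$'', fix any $e$ as in the statement and set $\phi := \sum_{i \in I} \prod_{\gamma \in E_i} e(\gamma)^*$. Since $e(\gamma)\Prev = \gamma$ forces $e$ to be injective, Lemma \ref{supp_is_events_in_polynomial} gives $\supp{\phi} = S := \{e(\gamma) : \gamma \in S^{(t-1)}\}$, which in particular contains no null histories. Applying $\Prev$ term-by-term using \eqref{monomial_restricted} yields $\phi\Prev = \sum_i E_i^* = \CovS{t-1}$, so $\phi$ is a prolongation. For preclusivity, fix any null $P$ and use the hypothesis to write $P = \Ext{P'} + Q$ with $\mu_{t-1}(P') = 0$ and $Q$ a set of null histories; since $S \cap Q = \emptyset$, the monomial event $\{e(\gamma) : \gamma \in E_i\}$ is contained in $P$ iff it is contained in $\Ext{P'}$ iff $E_i \subseteq P'$, giving $\phi(P) = \CovS{t-1}(P') = 0$.

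For minimal support I would apply Lemma \ref{histories_in_min_supp}. Fix $\delta = e(\gamma) \in S$. Since $\CovS{t-1} \in \BasicS{t-1}$, the iterated use of Theorem \ref{dif_prec_one_iterate} in the proof of Theorem \ref{zero_hist_not_in_min_prec_prol_supp} supplies a null $P' \in \FDprec{t-1}$ with $\dif{\CovS{t-1}}{\gamma}(P') = 1$; preclusivity of $\CovS{t-1}$ then gives $\CovS{t-1}(P' + \gamma) = 1$. Setting $A := \Ext{P' + \gamma} \in \FA{t}$ and $D := \Ext{P'} \in \FDprec{t}$ yields $A + D = \Ext{\gamma}$, and the bijective correspondence between $S$ and $S^{(t-1)}$ via $\Prev$ gives $(A + D) \cdot S = \Ext{\gamma} \cdot S = \{e(\gamma)\} = \{\delta\}$.

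For ``$\subseteq$'', fix $\phi \in \MinSupp{\PrecProl{t}}$ and let $S := \supp{\phi}$. By Theorem \ref{zero_hist_not_in_min_prec_prol_supp}, $S$ contains no null histories. Lemma \ref{histories_in_min_supp} provides, for every $\delta \in S$, some $A \in \FA{t}$ and $D \in \FD{t}$ with $(A + D) \cdot S = \{\delta\}$. Using the hypothesis, $D$ equals either $\Ext{D'}$ for a denied $D'$, or $\Ext{P'} + Q$ for a null $P'$ and null histories $Q$; in either case $S \cap Q = \emptyset$, so $(A + D) \cdot S = \Ext{X} \cdot S$ where $X \in \EA{t-1}$ takes either the ``affirmed + denied'' or ``affirmed + null'' form. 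Thus for every $\delta \in S$ there exists such an $X$ with $\{\delta' \in S : \delta'\Prev \in X\} = \{\delta\}$, which immediately forces $\Prev|_S$ to be injective. Also $\Prev(S) \supseteq S^{(t-1)}$, since any $\gamma \in S^{(t-1)}$ with no preimage in $S$ would make $\dif{\phi\Prev}{\gamma}$ vanish (using $\phi(F) = \phi(F \cdot S)$), contradicting $\gamma \in \supp{\phi\Prev} = S^{(t-1)}$.

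The hard part is the reverse containment $\Prev(S) \subseteq S^{(t-1)}$. Assume $\delta \in S$ has $\gamma := \delta\Prev \notin S^{(t-1)}$. The trick is to replace $X$ by $X + \gamma$: since $\gamma \notin \supp{\CovS{t-1}}$, \eqref{independent_of_histories_outside_support} gives $\CovS{t-1}(A' + \gamma) = \CovS{t-1}(A')$, so $X + \gamma$ retains the valid ``affirmed + denied/null'' form; yet by injectivity of $\Prev|_S$, $\Ext{X + \gamma} \cdot S = \{\delta\} + \{\delta\} = 0$, producing $A, D$ with $(A + D) \cdot S = 0$ and contradicting Lemma \ref{prec_prol_iff_supp_int} applied to $\phi$ itself. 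Hence $\Prev|_S$ is a bijection onto $S^{(t-1)}$; let $e$ be its inverse, noting $\mu_t(e(\gamma)) \neq 0$ for all $\gamma$ by Theorem \ref{zero_hist_not_in_min_prec_prol_supp}. Every event $F_k$ in the polynomial expansion of $\phi$ lies in $S$, so $F_k = e(F_k\Prev)$, and distinctness of the $F_k$ forces distinctness of the $F_k\Prev$. Since distinct monomials $(F_k\Prev)^*$ cannot cancel, the identity $\phi\Prev = \sum_k (F_k\Prev)^* = \sum_i E_i^*$ forces $\{F_k\Prev\}_k = \{E_i\}_i$ as sets, yielding $\phi = \sum_i (e(E_i))^* = \sum_i \prod_{\gamma \in E_i} e(\gamma)^*$.
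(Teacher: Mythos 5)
Your proof is correct, and while the existence half mirrors the paper's construction, your minimality and uniqueness arguments take a genuinely different route. The paper establishes minimality of the constructed $\phi$ by arguing directly that no smaller support can carry a prolongation of $\CovS{t-1}$, and establishes uniqueness by pushing every candidate through the normal form of Theorem \ref{prolongation_expansion_condition} and discarding, somewhat informally, any choice of $A_i$, $B_j$, $C_j$ that ``increases the support more than necessary''. You instead lean entirely on the support-characterisation machinery: Lemma \ref{histories_in_min_supp} is verified for the constructed $\phi$ by exhibiting explicit witness pairs $(A,D)=\bigl(\Ext{P'+\gamma},\Ext{P'}\bigr)$ with $A+D=\Ext{\gamma}$, and for the converse you show that $\Prev$ restricted to $\supp{\phi}$ is a bijection onto $S^{(t-1)}$ --- injectivity and surjectivity onto $S^{(t-1)}$ coming from the witness pairs and the prolongation condition, and the exclusion of extensions of histories outside $S^{(t-1)}$ coming from the perturbation $A'\mapsto A'+\gamma$, which yields a pair with $(A+D)\cdot S = 0$ and so contradicts Lemma \ref{prec_prol_iff_supp_int}. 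Recovering the polynomial form from this bijection together with the uniqueness of polynomial expansions then closes the argument; this replaces the paper's case analysis with something tighter. Two small caveats: the fact you invoke, that every $\gamma\in\supp[\big]{\CovS{t-1}}$ admits $P'\in\FDprec{t-1}$ with $\dif{\CovS{t-1}}{\gamma}(P')=1$, is precisely \eqref{basic_scheme_hist_prec}, which is established inside the proof of Theorem \ref{zero_hist_not_in_min_prec_prol_supp} rather than as a standalone lemma, so it should be cited as such; and unlike the paper you never verify that a map $e$ exists at all (i.e.\ that every history in $S^{(t-1)}$ has a non-null extension) --- this is not needed for the stated set equality, but it is what guarantees both sides are non-empty, and the paper's short argument for it is worth retaining.
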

\begin{proof}
	From Theorem \ref{prolongation_expansion_condition}, $\phi \in \Prol{t}$ iff its polynomial can be written in the form
	\begin{equation}
		\phi = \sum_{i\in I} {A_i}^* + \sum_{j\in J} ({B_j}^* + {C_j}^*) \,,
	\end{equation}
	where $J$ is some set of indices, each event is unique,
	\begin{listeqn}
		{A_i}\Prev = E_i
		\AND {B_i}\Prev = {C_i}\Prev \,.
	\end{listeqn}

	We now choose a map
	\begin{equation}
		\begin{aligned}
			e:S^{(t-1)} &\to \HS{t}
			\\ \gamma & \mapsto e(\gamma)
		\end{aligned}
	\end{equation}
	such that
	\begin{listeqn}
		{e(\gamma)}\Prev = \gamma
		\AND \mu_t\bigl(e(\gamma)\bigr) \not = 0 \,.
	\end{listeqn}
	Note that in order to find such a map we must be able to choose a non-null extension for every history in $S^{(t-1)}$. If all the extensions of $\gamma$ are null for some $\gamma \in S^{(t-1)}$, then
	\begin{align}
		\mu_{t-1}(\gamma) &= \mu_t\bigl( \Ext{\gamma} \bigr) \BY{qmeasure_consistency}
		\\ &= \sum \limits_{\gamma_1 \in \Ext{\gamma}} \sum \limits_{\gamma_2 \in \Ext{\gamma}} d_t(\gamma_1,\gamma_2)
		\\ &= 0 \qquad \text{since all extensions $\gamma_1,\gamma_2$ of $\gamma$ are null and by \eqref{null_means_no_interference}.}
	\end{align}
	But this contradicts with Theorem \ref{zero_hist_not_in_min_prec_prol_supp} because $\CovS{t-1} \in \BasicS{t-1}$. Therefore, there must be at least one non-null extension of $\gamma$ for all $\gamma \in S^{(t-1)}$, which means we can indeed find such a map $e$.

	If we then assign
	\begin{listeqn}
		A_i = \setdef{e(\gamma)}{\gamma \in E_i}
		\AND J = \emptyset\,,
	\end{listeqn}
	then
	\begin{equation}
		\supp{\phi} = \setdef[\big]{e(\gamma)}{\gamma \in S^{(t-1)}} \BYlem{supp_is_events_in_polynomial}.
	\end{equation}
	Moreover, consider any $P \in \FDprec{t}$. Then, $\exists P' \in \EA{t-1}$, where $P'$ is null, such that $P$ is made of $\Ext{P'}$ plus or minus some null histories. Therefore,
	\begin{align}
		\phi(P) &= \phi\bigl( \Ext{P'}+P+\Ext{P'} \bigr)\,.
		\intertext{But $(P+\Ext{P'})$ only contains null histories, which are not in our support by our choice of $e$ so we can use \eqref{independent_of_histories_outside_support} to remove each one of these null histories from $\phi$'s argument. Therefore,}
		\phi(P) &= \phi\bigl( \Ext{P'} \bigr)
		\\ &= \phi\Prev(P')
		\\ &= \CovS{t-1}(P') \qquad \text{since }\phi \in \Prol{t}
		\\ &= 0 \qquad \text{since } \CovS{t-1} \in \Prec{t}\,.
	\end{align}
	Therefore, $\phi \in \Prec{t}$. Also, note that the support can not be made any smaller whilst maintaining $\phi$ as a prolongation, otherwise we could not construct all the ${E_i}^*$ terms. Therefore, this $\phi$ is a minimally supported preclusive prolongation of $\CovS{t-1}$. Note that
	\begin{align}
		\sum \limits_{i \in I} \prod \limits_{\gamma \in E_i} {e(\gamma)}^* &= \sum \limits_{i \in I} {A_i}^*
		\\ &= \phi \,.
	\end{align}

	Now, consider the other $\phi \in \Prol{t}$. If they contain a null history in their support, then they are excluded from being minimally supported preclusive prolongations by Theorem \ref{zero_hist_not_in_min_prec_prol_supp}. Instead, we will now restrict ourselves to only consider supports that do not contain null histories.
	
	Once we have exhausted all the different choices for the $e$ maps, which essentially assign single non-null extensions to every history in the previous support, then the only other choices for $A_i$ is to choose two extensions $\gamma_1,\gamma_2 \in \HS{t}$ for a single $\gamma \in S^{(t-1)}$, i.e. $\gamma_1\Prev=\gamma_2\Prev=\gamma$, such that both $\gamma_1 \in \supp{\phi}$ and $\gamma_2 \in \supp{\phi}$. But then the support is larger than one we had previously, which means $\phi$ won't be a minimally supported preclusive prolongation.
	
	The other choice is to have a non-empty $J$ with
	\begin{align}
		{B_j}\Prev = {C_j}\Prev \,.
	\end{align}
	If we use histories to construct $B_j$ and $C_j$ that are outside the image of $e$ then we are again increasing our support more than necessary. If instead we try to only use histories in the image of $e$ then each history in the previous support is linked to a unique history in the new support, which means
	\begin{equation}
		 {B_j}\Prev = {C_j}\Prev \implies {B_j} = {C_j}\,.
	\end{equation}
	This excludes us from being able to build such pairs. There are therefore no other prolongations of $\CovS{t-1}$ that can be minimally supported.
\end{proof}

One immediate consequence of this is that the co-event chosen by the basic evolving scheme will not change if the new history space is just a copy of the previous one.
\begin{corollary}\label{same_hist_space}
	Consider $\CovS{t-1} \in \BasicS{t}$. If every $\gamma \in \HS{t-1}$ only has one extension in $\HS{t}$, i.e. $\HS{t-1}=\HS{t}$, then
	\begin{equation}
		\MinSupp[\big]{\PrecProl{t}} = \set[\big]{\CovS{t-1}}\,.
	\end{equation}
\end{corollary}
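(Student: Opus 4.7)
The plan is to derive this as a direct specialisation of Theorem \ref{no_new_precs}. The hypothesis of that theorem requires that every null event at stage $t$ agrees with the extension of some null event at stage $t-1$, up to null histories. Since each $\gamma \in \HS{t-1}$ has exactly one extension, the extension map $\Extdef : \EA{t-1} \to \EA{t}$ is a bijection with inverse $\Prev$. Hence for any $P \in \FDprec{t}$ I can simply take $P' := P\Prev$, so that $\Ext{P'}=P$ and by \eqref{qmeasure_consistency} $\mu_{t-1}(P')=\mu_t(P)=0$, while the event $(P+\Ext{P'})=0$ contains no histories at all. The hypothesis of Theorem \ref{no_new_precs} is therefore satisfied trivially.

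Next I would invoke Theorem \ref{no_new_precs} to characterise $\MinSupp{\PrecProl{t}}$. An admissible map $e:S^{(t-1)} \to \HS{t}$ must have $e(\gamma)\Prev = \gamma$, and under our hypothesis this pins down $e(\gamma)$ to be the unique extension of $\gamma$; that is, $e$ is forced to be the bijection identifying $\HS{t-1}$ with $\HS{t}$. I still need the non-nullity condition $\mu_t(e(\gamma)) \neq 0$. Since $\CovS{t-1} \in \BasicS{t-1}$, Theorem \ref{zero_hist_not_in_min_prec_prol_supp} gives $\mu_{t-1}(\gamma) \neq 0$ for every $\gamma \in S^{(t-1)}$, and by \eqref{qmeasure_consistency} this coincides with $\mu_t(\Ext{\gamma}) = \mu_t(e(\gamma))$, which is therefore also non-zero.

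Hence exactly one map $e$ is admissible, and by Theorem \ref{no_new_precs} the set $\MinSupp{\PrecProl{t}}$ consists of the single co-event $\sum_{i \in I}\prod_{\gamma \in E_i} e(\gamma)^*$, where $\sum_{i \in I} \prod_{\gamma \in E_i}\gamma^*$ is the polynomial form of $\CovS{t-1}$. Under the identification $\HS{t-1}=\HS{t}$ given by $e$, this is precisely $\CovS{t-1}$, completing the proof.

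No real obstacle is expected here: the work is all hidden inside Theorem \ref{no_new_precs} and Theorem \ref{zero_hist_not_in_min_prec_prol_supp}. The only point requiring a moment's care is recognising that the hypothesis of Theorem \ref{no_new_precs} is satisfied vacuously (with $P+\Ext{P'}=0$), rather than through any genuine use of non-interfering null histories.
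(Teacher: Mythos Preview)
Your proposal is correct and follows essentially the same route as the paper: both verify the hypothesis of Theorem~\ref{no_new_precs} via the bijection $\Extdef$, note that the unique admissible map $e$ is the identity, and read off $\MinSupp[\big]{\PrecProl{t}}=\set[\big]{\CovS{t-1}}$. Your version is slightly more explicit in checking the non-nullity condition $\mu_t(e(\gamma))\neq 0$ via Theorem~\ref{zero_hist_not_in_min_prec_prol_supp}, whereas the paper leaves this implicit (it was already established inside the proof of Theorem~\ref{no_new_precs} that at least one admissible $e$ exists).
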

\begin{proof}
	Since $\HS{t}=\HS{t-1}$, by consistency all null events in $\EA{t}$ are extensions of null events in the equivalent $\EA{t-1}$, so we can apply Theorem \ref{no_new_precs}. Note we will use the polynomial expansion 
	\begin{equation}
		\CovS{t-1} = \sum \limits_{i \in I} {E_i}^*\,,
	\end{equation}
	and denote $S^{(t-1)}=\supp[\big]{\CovS{t-1}}$. The only map $\HS{t-1} \to \HS{t}$ with the property that histories are mapped to their extension is
	\begin{equation}
		\imath(\gamma)=\gamma \,.
	\end{equation}
	Therefore,
	\begin{align}
		\MinSupp[\big]{\PrecProl{t}} &= \setdef[\bigg]{ \sum \limits_{i \in I} \prod \limits_{\gamma \in E_i} {e(\gamma)}^*}{e: S^{(t-1)} \to \HS{t} \st {e(\gamma)}\Prev =\gamma \text{ and } \mu_t(e(\gamma)) \not = 0} 
		\\ &= \set[\bigg]{\sum \limits_{i \in I} \prod \limits_{\gamma \in E_i} {\imath(\gamma)}^*}
		\\ &= \set[\bigg]{\sum \limits_{i \in I} \prod \limits_{\gamma \in E_i} \gamma^*}
		\\ &= \set[\big]{\CovS{t-1}} \,.
	\end{align}
\end{proof}

Finally, we can also use Theorem \ref{no_new_precs} to confirm that the basic evolving scheme reduces to the classical scheme if it is supplied with a classical measure.
\begin{corollary}\label{return_to_classical}
	If $\mu_t$ is a classical measure for all $t$, that is
	\begin{equation}
		\mu_t(A+B)=\mu_t(A)+\mu_t(B) \label{classical_measure_rule} \,,
	\end{equation}
	then for every $t$
	\begin{equation}
		\BasicS{t} =\ClassicalS{t} \,.
	\end{equation}
\end{corollary}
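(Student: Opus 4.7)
The plan is to prove the corollary by induction on $t$. The base case $t = 0$ is not directly covered by Theorem \ref{no_new_precs}, which is an iterative statement relating stage $t-1$ to stage $t$, so I will handle it by a direct argument: I want to show that $\BasicS{0} = \setdef{\gamma^*}{\gamma \in \HS{0},\ \mu_0(\gamma) \neq 0}$, which coincides with $\ClassicalS{0}$ by \eqref{classical_coevents_not_null_hist}. Any such $\gamma^*$ with $\gamma$ non-null is preclusive, because a classical null event $E$ must have every history in it null (since $\mu_0(E) = \sum_{\gamma' \in E} \mu_0(\gamma')$ with non-negative summands), so $\gamma \notin E$ and $\gamma^*(E) = 0$; it also affirms $1$ and has support of size one, which is minimal among co-events satisfying both conditions (the only co-events with strictly smaller support are the constants $0$ and $1$, and $0$ fails $\phi(1)=1$ while $1$ fails preclusion because $\mu_0(0) = 0$). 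Conversely, any $\phi \in \BasicS{0}$ has support of size exactly one, say $\{\gamma\}$; applying Lemma \ref{split_coevent_wrt_hist} gives $\phi = \gamma^* \cdot \phi_1 + \phi_2$ with $\phi_1, \phi_2$ constant, and the constraints $\phi(1) = 1$ and preclusion then force $\phi = \gamma^*$ with $\gamma$ non-null.

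For the inductive step, I assume $\BasicS{t-1} = \ClassicalS{t-1}$, so every element of $\BasicS{t-1}$ has the form $\gamma'^*$ with $\mu_{t-1}(\gamma') \neq 0$. I will invoke Theorem \ref{no_new_precs} with this $\gamma'^*$. The hypothesis requires that every null $P \in \EA{t}$ be expressible as $\Ext{P'}$ plus a collection of null histories, for some null $P' \in \EA{t-1}$. Classicality of $\mu_t$ implies every history in a null $P$ is itself null, so choosing $P' = 0 \in \EA{t-1}$ (for which $\mu_{t-1}(P') = 0$ and $\Ext{P'} = 0$) verifies the hypothesis trivially. Since $\gamma'^*$ has polynomial expansion with index set $I = \{1\}$ and single event $E_1 = \{\gamma'\}$, Theorem \ref{no_new_precs} yields
\begin{equation}
\MinSupp[\big]{\PrecProl{t}(\gamma'^*)} = \setdef{\gamma''^*}{\gamma'' \in \HS{t},\ \gamma''\Prev = \gamma',\ \mu_t(\gamma'') \neq 0}.
\end{equation}

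Taking the union over all $\gamma'^* \in \BasicS{t-1}$ then gives
\begin{equation}
\BasicS{t} = \setdef{\gamma''^*}{\gamma'' \in \HS{t},\ \mu_t(\gamma'') \neq 0 \text{ and } \mu_{t-1}(\gamma''\Prev) \neq 0}.
\end{equation}
To match $\ClassicalS{t} = \setdef{\gamma^*}{\gamma \in \HS{t},\ \mu_t(\gamma) \neq 0}$, I need the condition $\mu_{t-1}(\gamma''\Prev) \neq 0$ to be automatic whenever $\mu_t(\gamma'') \neq 0$. This follows from the consistency condition \eqref{qmeasure_consistency} together with classicality: $\mu_{t-1}(\gamma''\Prev) = \mu_t(\Ext{\gamma''\Prev}) = \sum_{\eta\Prev = \gamma''\Prev} \mu_t(\eta) \geq \mu_t(\gamma'') > 0$. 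The main obstacle I anticipate is the base case $t=0$, where Theorem \ref{no_new_precs} cannot be invoked and one must argue directly about the structure of single-support Boolean co-events to rule out the alternative $\gamma^* + 1$; the inductive step itself is essentially a bookkeeping exercise once Theorem \ref{no_new_precs} is applied.
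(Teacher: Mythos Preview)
Your proof is essentially the same as the paper's: induction on $t$, a direct argument for $t=0$, and Theorem~\ref{no_new_precs} for the step (your explicit choice $P'=0$ is a clean way to verify its hypothesis). One small gap in your base case: when you assert ``any $\phi \in \BasicS{0}$ has support of size exactly one'', remember that minimal support is ordered by \emph{inclusion}, not cardinality, so the existence of singleton-supported co-events does not by itself rule out a minimally supported $\phi$ whose support $S$ is disjoint from every non-null $\{\gamma\}$, i.e.\ consists entirely of null histories. The paper closes this by invoking Theorem~\ref{zero_hist_not_in_min_prec_prol_supp}; alternatively you can argue directly that such an $S$ is itself null (classical additivity), whence $\phi(S)=0$ by preclusion, yet $\phi(S)=\phi(1\cdot S)=\phi(1)=1$ by \eqref{co-event_supported_on_S}, a contradiction.
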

\begin{proof}
	First note that if \eqref{classical_measure_rule} holds then
	\begin{equation}
		\mu_t(E)=0 \implies \mu_t(\gamma)=0 \quad \forall \gamma \in E \,.
	\end{equation}
	Therefore, all null events are made of null histories, which means Theorem \ref{no_new_precs} applies in this case.
	
	Now, consider the initial co-events in our basic evolving scheme. The above property of the quantum measure means, for $\gamma \in \HS{0}$, the co-event
	\begin{equation}
		\phi=\gamma^*
	\end{equation}
	will be preclusive iff $\gamma$ is not null. Moreover,
	\begin{equation}
		\gamma^*(1)=1 \,.
	\end{equation}
	Therefore, when $\gamma$ is not null, $\gamma^*$ is a preclusive co-event that affirms $1$. The only co-event that has a more minimal support is $0$, but $0(1)=0$, which means it can not be chosen for the initial co-event of our scheme. Therefore, for non-null $\gamma$
	\begin{equation}
		\gamma^* \in \BasicS{0}\,.
	\end{equation}
	Moreover, all other co-events are excluded from $\BasicS{0}$ because either they are not minimally supported, or because they contain null histories, which Theorem \ref{zero_hist_not_in_min_prec_prol_supp} excludes. Therefore,
	\begin{align}
		\BasicS{0} &= \setdef{\gamma^*}{\mu_0(\gamma) \neq 0}
		\\ &= \ClassicalS{0} \,.
	\end{align}

	Now, for $t>0$, suppose
	\begin{equation}
		\BasicS{t-1}=\ClassicalS{t-1}\,.
	\end{equation}
	Then
	\begin{align}
		\BasicS{t} &= \bigcup\limits_{\phi \in \ClassicalS{t-1}} \MinSupp[\big]{\PrecProl{t}(\phi) }
		\\ &= \bigcup\limits_{\substack{\gamma \in \HS{t-1} \st \\ \mu_{t-1}(\gamma) \neq 0}} \MinSupp[\big]{\PrecProl{t}(\gamma^*)}
		\\ &= \bigcup \limits_{\substack{\gamma \in \HS{t-1} \st \\ \mu_{t-1}(\gamma) \neq 0}} \setdef[\big]{ e(\gamma)^*}{e: \gamma \to \HS{t} \st {e(\gamma)}\Prev =\gamma \text{ and } \mu_t(e(\gamma)) \not = 0} \BYthm{no_new_precs}
		\\ &= \bigcup \limits_{\substack{\gamma \in \HS{t-1} \st \\ \mu_{t-1}(\gamma) \neq 0}} \setdef[\big]{ {\gamma'}^*}{\gamma' \in \HS{t}, \ \gamma'\Prev=\gamma \text{ and } \mu_t(\gamma') \neq 0}
		\\ &= \setdef[\big]{ { \gamma'}^*}{\gamma' \in \HS{t}, \ \mu_t(\gamma') \neq 0 \text{ and } \exists \gamma \in \HS{t-1} \st \gamma'\Prev=\gamma \text{ and } \mu_{t-1}(\gamma) \neq 0}
		\\ &= \setdef[\big]{ { \gamma'}^*}{\gamma' \in \HS{t}, \ \mu_t(\gamma') \neq 0 \text{ and } \mu_{t-1}(\gamma'\Prev) \neq 0}\,.
		\intertext{But for a classical measure, a non-null history will always have a non-null restriction. Therefore,}
		\BasicS{t} &= \setdef{{\gamma'}^*}{\gamma' \in \HS{t},\ \mu_t(\gamma') \neq 0}
		\\ &= \ClassicalS{t} \,.
	\end{align}
	The claim then follows by induction.
\end{proof}

\section{Other Evolving Schemes} \label{more_evolving_schemes}

Whilst the basic evolving scheme used above selects a small set of co-events from the total of $2^{2^{\HS{t}}}$ in $\CS{t}$, when applying the scheme to the $n$-site hopper it becomes apparent that $\BasicS{t}$ grows very quickly with $t$.

In practice, when computing co-events for the basic evolving scheme, once we have chosen $\CovS{t-1}$ we can use Lemma \ref{prec_prol_iff_supp_int} to generate a set of events that are the supports of co-events in $\PrecProl{t}$, and from here we can generate the set of events that are supports of co-events in $\MinSupp[\big]{\PrecProl{t}}$\footnote{The details of this operation are left out here, but the reader is free to contact the author for a method, including short cuts, if they are interested in computationally running this scheme.}. Once we have such an event $S$ we can generate all the co-events $ \phi \in \Basic{t}$ that have $S$ as their support using the restrictions
\begin{equation}
	\phi(E) = \begin{cases}
		\phi(E \cdot S) & \quad \text{if } E \not \subseteq S
		\\ 1 & \quad \text{if } \exists A \in \FA{t} \st A \cdot S = E
		\\ 0 & \quad \text{if } \exists D \in \FD{t} \st D \cdot S = E,
	\end{cases}
\end{equation}
and for the other events $E$ that do not match the above conditions we go through all possible choices of either mapping to zero or one. Each one of these co-events will be in $\Basic{t}$, so if there are $n$ events that do not match the above conditions there will be $2^n$ different co-events generated, all with the same support $S$.

Thus, one proposal from Rafael Sorkin is to only generate one co-event per support. In particular, we choose the co-event to be \textit{maximally affirmative}, that is we choose $\CovS{t}$ to be
\begin{equation}
	\CovS{t}(E) = \begin{cases}
		\CovS{t}(E \cdot S) & \quad \text{if } E \not \subseteq S
		\\ 1 & \quad \text{if } \exists A \in \FA{t} \st A \cdot S = E
		\\ 0 & \quad \text{if } \exists D \in \FD{t} \st D \cdot S = E
		\\ 1 & \quad \text{otherwise.}
	\end{cases}
\end{equation}
If we were to add this maximally affirmative condition on top of our basic evolving scheme, then the sequence of co-events generated $\CovS{t}$ would still be a sequence that the basic evolving scheme could also generate. Thus, if there are any properties that are true for \textit{all} co-events generated by the basic scheme then the maximally affirmative scheme would inherit these properties. In particular, Theorem \ref{zero_hist_not_in_min_prec_prol_supp}, Theorem \ref{no_new_precs} and Corollary \ref{same_hist_space} would still hold. In addition, in Corollary \ref{return_to_classical} all the co-events in $\MinSupp[\big]{\PrecProl{t}}$ are uniquely specified by their support, which means the maximally affirmative condition would not change the set of co-events once it is applied. Therefore, this corollary would still hold, which means the maximally affirmative scheme would also return the classical scheme.

An alternative way to alter the basic scheme in a way that reduces the number of co-events produced is to put additional pressure on the size of the co-events' supports. Already, Theorem \ref{no_new_precs} has given us the idea that the number of minimally supported preclusive prolongations, in this case the number of extension maps $e$, grows exponentially with the size of the previous support, so a small support here would reduce the number of choices. Since the size of the support can not shrink with prolongation, we require a constant downward pressure. One way to do this is to require the support to not only be minimally supported amongst all preclusive prolongations of $\CovS{t-1}$, but to require that it is also \textit{globally} minimally supported amongst all preclusive prolongations of \textit{any} previously allowed choices for $\CovS{t-1}$. This scheme would be given by generating
\begin{equation}
	\GlobalS{t} := \MinSupp[\bigg]{\ \bigcup \limits_{\phi \in \GlobalS{t-1}} \PrecProl{t}(\phi)}\,.
\end{equation}
Note that if $\GlobalS{t-1} \subseteq \BasicS{t-1}$, then
\begin{align}
	\GlobalS{t} &= \MinSupp[\bigg]{\ \bigcup \limits_{\phi \in \GlobalS{t-1}} \MinSupp[\big]{\PrecProl{t}(\phi)}} \label{global_scheme_apply_minsupp_twice}
	\\ & \subseteq \MinSupp[\bigg]{\ \bigcup \limits_{\phi \in \BasicS{t-1}} \MinSupp[\big]{\PrecProl{t}(\phi)}}
	\\ & \subseteq \bigcup \limits_{\phi \in \BasicS{t-1}} \MinSupp[\big]{\PrecProl{t}(\phi)}
	\\ &= \BasicS{t} \,.
\end{align}
Therefore, if we choose $\GlobalS{0} = \BasicS{0}$, then this globally minimal scheme would produce co-events that could also be produced by the basic evolving scheme. Therefore, again Theorem \ref{zero_hist_not_in_min_prec_prol_supp}, Theorem \ref{no_new_precs} and Corollary \ref{same_hist_space} would still hold. Moreover, from the identity in \eqref{global_scheme_apply_minsupp_twice} we can see that only one extra minimal support operation differentiates this scheme from the basic one. Therefore, if we note that the minimal support operation acting on a set of exclusively classical co-events does nothing to change that set, then we can see that, similar to Corollary \ref{return_to_classical}, when the globally minimal scheme is supplied with a classical measure it will also reduce to the classical scheme, and specifically \textit{not} a subset of it.

However, one potential problem for the globally minimal scheme is that a co-event in $\GlobalS{t-1}$ may not have any prolongations in $\GlobalS{t}$. If this were true then we would not be able to run a sequence of expressed co-events $\CovS{t}$ that are prolongations of the last without the danger of the process terminating, which would break the compatibility with the growing block view. This is clearly not a problem in the basic evolving scheme because the union over all previous co-events guarantees a prolongation for each of them, but with the stronger minimal support condition in the global scheme there seems to be a definite potential that some prolongations will be taken out entirely by the prolongations of a different co-event. However, an example of this problem is not yet known, and it may be possible to prove that such a termination is not actually possible (with, perhaps, some additional reasonable conditions).

\section{Discussion} \label{discussion}

We have introduced new objects and notation to describe a QMT for discrete evolving systems, along with Rafael Sorkin's co-event interpretation. We introduced the basic evolving scheme as a theory for extracting a set of allowed co-events by requiring them to be minimally supported preclusive prolongations of the previously chosen co-event. There are a number of properties that all the co-events produced by this scheme satisfy, and in particular we determined that null histories have no physical relevance in this theory since they are not contained in the support of any allowed co-event. In addition, we saw that if our scheme is given a classical measure it will give us the classical co-event scheme.

Now, there remain a number of other properties to explore within our basic evolving scheme, or the maximally affirmative and globally minimal schemes that follow from it. The first is whether the scheme can return classical co-events through a coarse graining procedure. In particular, if we partition our history space into histories that agree with different classical outcomes, and reduce our event algebra to events that are only made by adding together these classical partitions in different ways, then we would want the co-event acting on this reduced event algebra to act classically, i.e. classical logic would hold for this reduced algebra. One half-way point for establishing this as true would be to determine whether all the allowed co-events act classically on single time measurement outcomes, that is events that correspond to something we could measure in ordinary quantum mechanics, like ``the hopper is at site $i$ at time $\tau$''. However, for the basic and maximally affirmative scheme it is known that there exist allowed co-events for the $n$-site hopper that describe non-classical measurement outcomes. The known counter example had a relatively large support, so we might expect that a stricter minimality condition, like in the globally minimal scheme, would prevent the formation of such co-events. A number of counter examples were also found in a GHZ set up, but once the globally minimality condition was applied these counter examples fell away. 

In addition, we may want to consider the following questions for our schemes:
\begin{itemize}
	\item For a system like the $n$-site hopper, if we instead start with $\HS{0}'=\HS{T}$, i.e. we start our scheme at a later stage, then would $\BasicS{t}$ be approximately the same as ${\BasicS{t-T}}'$ at sufficiently large $t$? Clearly, if we choose ${\BasicS{0}}'=\BasicS{T}$ then the two would be exactly the same. But if this is quite different from the co-events that would result from the set in \eqref{basic_0_def}, and this difference persists into large $t$, then perhaps we would have to revise our choice of initial co-events. Note that this is not a problem for the classical scheme because $\ClassicalS{t}$ is always made of all the classical co-events who's supporting history is not null, and so starting the scheme at a later stage has no effect.
	\item If our system is composed of isolated sub systems (with or without entanglement between them), then does evolving them all simultaneously differ from evolving them separately in sequence? In a GHZ set up, we found that the globally minimal and maximally affirmative schemes produced different results if we evolved lab 1 by one step, then lab 2 by one step, then lab 3 by one step as opposed to evolving all labs by one step simultaneously. This leaves the question of which choice is the correct one, and really we want a scheme that does not suffer from these ambiguities.
	\item If we have two disjoint systems $\mathbf{A}$ and $\mathbf{B}$, and we apply the scheme to their union, can this scheme reduce to the scheme applied to just $\mathbf{A}$? For two disjoint systems the history space would be the product of the two individual history spaces and the decoherence matrix would be given by the tensor product of the individual decoherence matrices. If the local reality of $\mathbf{A}$ looked different when $\mathbf{B}$ is included then our scheme would be inherently non-local, even without any entanglement.
	\item If there exists an event that corresponds to something we could measure in ordinary quantum mechanics with non-zero probability, does there exist a co-event produced by our scheme that affirms it? In other words, if a measurement outcome event is not impossible we expect there to exist a possible co-event where that event happens.
\end{itemize}
These questions are challenging and will likely further define and alter what evolving scheme is appropriate for producing the realities we expect. In order to keep the eventual evolving scheme light and insightful, we hope to be able to answer the above questions positively using basic primitive rules, as we did to prove Theorem \ref{zero_hist_not_in_min_prec_prol_supp}, as opposed to adding in complex rules that trivially give the desired results.

\subsection*{Acknowledgements}
We thank Rafael Sorkin for the idea of the evolving co-event scheme and discussions of the excluded null histories claim. We thank Fay Dowker for discussions related to all the claims in this paper. Henry Wilkes is supported by STFC grant ST/N504336/1. This research was supported in part by Perimeter Institute for Theoretical Physics. Research at Perimeter Institute is supported by the Government of Canada through the Department of Innovation, Science and Economic Development and by the Province of Ontario through the Ministry of Research and Innovation.

\end{document}